\newcommand*\person[1]{\textsc{#1}}									% prints author names as small caps
\newtheoremstyle{myDefStyle}
	{\topsep} 						% Space above
	{\topsep} 						% Space below
	{\itshape} 						% Body font
	{} 										% Indent amount
	{\sffamily\bfseries}	% Theorem head font
	{} 										% Punctuation after theorem head
	{\newline} 						% Space after theorem head
	{\thmname{#1} \thmnumber{#2} \thmnote{{\bfseries(#3)}}}	% Theorem head spec (can be left empty, meaning `normal')
\theoremstyle{myDefStyle}
\newtheorem{definition}{Definition}%[section]				% {name of the env}{printed name}
\newtheorem{proposition}{Proposition}%[section]
\newcommand{\ra}[1]{\renewcommand{\arraystretch}{#1}}
\newcolumntype{d}[1]{D{.}{.}{#1}} % "decimal" column type
\newcommand{\red}[1]{{\color{red} #1}}
\renewcommand*{\qed}{\hfill\ensuremath{\blacksquare}}%
\newcommand{\ie}{\textit{i.e.}\xspace}
\newcommand{\EA}[1]{\ensuremath{\left\langle #1 \right\rangle}\xspace}		% Ensemble Average
\newcommand{\EV}[1]{\ensuremath{\operatorname{E}\left[ #1 \right]}\xspace}	% Expectation Value Operator
\newcommand{\gr}{\ensuremath{g}\xspace}
\newcommand{\EAgr}{\ensuremath{\EA{\gr}}\xspace}
\newcommand{\EAgrA}{\ensuremath{\EA{\gr_{\mathcal{A}}}}\xspace}
\newcommand{\EAgrB}{\ensuremath{\EA{\gr_{\mathcal{B}}}}\xspace}
\newcommand{\EAgra}{\ensuremath{\EA{\gr_a}}\xspace}
\newcommand{\EAgrb}{\ensuremath{\EA{\gr_b}}\xspace}
\newcommand{\EAgrc}{\ensuremath{\EA{\gr_c}}\xspace}
\newcommand{\TAgr}{\ensuremath{\bar{\gr}}\xspace}
\newcommand{\TAgrA}{\ensuremath{\bar{\gr}_{\mathcal{A}}}\xspace}
\newcommand{\TAgrB}{\ensuremath{\bar{\gr}_{\mathcal{B}}}\xspace}
\newcommand{\TAgra}{\ensuremath{\bar{\gr}_a}\xspace}
\newcommand{\TAgrb}{\ensuremath{\bar{\gr}_b}\xspace}
\newcommand{\TAgrc}{\ensuremath{\bar{\gr}_c}\xspace}
\newcommand{\ETa}{\ensuremath{\langle t_a\rangle}\xspace}
\newcommand{\ETb}{\ensuremath{\langle t_b\rangle}\xspace}
\newcommand{\ETc}{\ensuremath{\langle t_c\rangle}\xspace}
\newcommand{\nn}{\nonumber}
\newcommand{\natop}{\ensuremath{\alpha}\xspace}								% natural operation
\newcommand{\TP}{\ensuremath{\mathcal{M}}\xspace}			% Timed Payment
\newcommand{\TL}{\ensuremath{L}\xspace}														% Time Lottery
\newcommand{\DTL}{\ensuremath{\mathcal{L}}\xspace}									% Degenerated Time Lottery
\newcommand{\D}{\ensuremath{\Delta}\xspace}
\newcommand{\TLa}{\ensuremath{\TL_a}\xspace}							% time lottery a
\newcommand{\TLb}{\ensuremath{\TL_b}\xspace}							% time lottery b
\newcommand{\TLc}{\ensuremath{\TL_c}\xspace}							% time lottery c
\newcommand{\gra}{\ensuremath{\gr_a}\xspace}			% growth rate of time lottery a
\newcommand{\grb}{\ensuremath{\gr_b}\xspace}			% growth rate of time lottery b
\newcommand{\grc}{\ensuremath{\gr_c}\xspace}			% growth rate of time lottery c
\newcommand{\Cref}[1]{Corollary~\ref{coro:#1}}
\newcommand{\cref}[1]{Cor.~\ref{coro:#1}}
\newcommand{\dlabel}[1]{\label{def:#1}}
\newcommand{\elabel}[1]{\label{eq:#1}}
\newcommand{\eref}[1]{Eq.~(\ref{eq:#1})}
\newcommand{\flabel}[1]{\label{fig:#1}}
\newcommand{\fref}[1]{Fig.~\ref{fig:#1}}
\newcommand{\propref}[1]{Proposition~\ref{prop:#1}}
\newcommand{\tlabel}[1]{\label{tab:#1}}
\newcommand{\tref}[1]{Tab.~\ref{tab:#1}}
\newcommand{\Tref}[1]{Table~\ref{tab:#1}}
\newcommand{\seclabel}[1]{\label{sec:#1}}
\newcommand{\secref}[1]{Section~\ref{sec:#1}}
\newcommand{\sseclabel}[1]{\label{ssec:#1}}
\newcommand{\ssref}[1]{Subsec.~\ref{ssec:#1}}
\newcommand{\be}{\begin{equation}}
\newcommand{\ee}{\end{equation}}
\newcommand{\bee}{\begin{equation}}
\newcommand{\eee}{\end{equation}}
\newcommand{\ba}{\begin{align}}
\newcommand{\ea}{\end{align}}
\newcommand{\bc}{\begin{center}}
\newcommand{\ec}{\end{center}}
\newcommand{\bea}{\begin{eqnarray}}
\newcommand{\eea}{\end{eqnarray}}
\newcommand{\bi}{\begin{itemize}}
\newcommand{\ei}{\end{itemize}}
\newcommand{\tI}{\ensuremath{t_1}\xspace}
\newcommand{\tII}{\ensuremath{t_2}\xspace}
\newcommand{\Dt}{\ensuremath{\Delta t}\xspace}
\newcommand{\gI}{\ensuremath{g_{\TP_1}}\xspace}
\newcommand{\gII}{\ensuremath{g_{\TP_2}}\xspace}
\newcommand{\DTLgr}{\ensuremath{g_{\DTL}}\xspace}
\newcommand{\oI}{\text{I}\xspace}
\newcommand{\oII}{\text{II}\xspace}
\newcommand{\Dx}{\ensuremath{\Delta x}\xspace}
\numberwithin{equation}{section}
\title{Risk Preferences in Time Lotteries\thanks{~~We benefited from detailed comments by \person{Alexander Adamou} and \person{Ole Peters}. We also wish to thank \person{Alex Kacelnik} for fruitful discussions, along with seminar participants at the Max Planck Institute for Mathematics in the Sciences, Leipzig, University of Duisburg-Essen, and Rutgers University, and with conference participants at the D-TEA Conference 2020 and the 2020 Annual Meeting of the German Economic Association.
}}
\author{
% 	\person{Alexander T. I. Adamou}\footnote{\texttt{\href{mailto:a.adamou@lml.org.uk}{~\Letter~a.adamou@lml.org.uk}}, London Mathematical Laboratory} \and 
	\person{Yonatan Berman}\footnote{\texttt{\href{mailto:y.berman@lml.org.uk}{~\Letter~y.berman@lml.org.uk}}, London Mathematical Laboratory} \and 
	\person{Mark Kirstein}\footnote{\texttt{\href{mailto:m.kirstein@mis.mpg.de}{~\Letter~m.kirstein@mis.mpg.de}},
% 	London Mathematical Laboratory,
	Max-Planck-Institute for Mathematics in the Sciences
% and 
% 	Institute of Mathematics,	Leipzig University
%	Institute for Socioeconomics, University Duisburg-Essen
	}
% 	 \and \person{Ole B. Peters}\footnote{\texttt{\href{mailto:o.peters@lml.org.uk}{~\Letter~o.peters@lml.org.uk}}, London Mathematical Laboratory and Santa Fe Institute}
}
\begin{document}

\maketitle

% \noindent\hrulefill
% \vspace{.5em}
\begin{abstract}
\noindent An important but understudied question in economics is how people choose when facing uncertainty in the timing of events.
%An important but understudied question in economics is how people choose when facing uncertainty in the timing of events.
Here we study preferences over time lotteries, in which the payment amount is certain but the payment time is uncertain.
%In this paper we study preferences over time lotteries, in which the payment amount is certain but the payment time is uncertain.
Expected discounted utility theory (EDUT) predicts decision makers to be risk-seeking over time lotteries.
%Expected discounted utility theory (EDUT) predicts decision makers to be risk-seeking over time lotteries.
We explore a normative model of growth-optimality, in which decision makers maximise the long-term growth rate of their wealth.
Revisiting experimental evidence on time lotteries, we find that growth-optimality accords better with the evidence than EDUT.
We outline future experiments to scrutinise further the plausibility of growth-optimality.
%Surprisingly, the higher the ensemble-average growth rate of a time lottery is, the less attractive it becomes. Decision makers may not consider it as a choice criterion. The time-average growth rate may thus serve as a better criterion.
%EE predicts risk-neutral behaviour in time lotteries.
%Revisiting existing experimental evidence on risk preferences in time lotteries, we show that EDUT is falsified by experiments.
%EE is neither falsified or verified by the existing experiments. Thus, we outline future experiments to scrutinise further the plausibility of EE preferences.
\end{abstract}

\vspace{1em}

\noindent\textsf{\textbf{Keywords}\ } Ergodicity Economics, Growth-Optimal Preferences, Time Lotteries
% , Time-Average Growth Rate Maximisation
\vspace{.5em}

\noindent\textsf{\textbf{JEL Classification\ }}
% \href{https://www.aeaweb.org/econlit/jelCodes.php?view=jel#B}{%
% B16, % HET: Quantitative and Mathematical 
\href{https://www.aeaweb.org/econlit/jelCodes.php?view=jel#C}{
% C60		% Mathematical Models
% $\cdot$
C61}	% Optimization Techniques • Programming Models • Dynamic Analysis
$\cdot$
\href{https://www.aeaweb.org/econlit/jelCodes.php?view=jel#D}{%
D01 	% Microeconomic Behaviour: Underlying Principles
$\cdot$
D81 	% Criteria for Decision-Making under Risk and Uncertainty
$\cdot$
D9} 	% Micro-Based Behavioral Economics 

% \vspace{1em}

% \noindent\hrulefill

% \vspace{2em}

% CONVENTIONS
% use hyphenation: risk-seeking, growth-optimal, risk-averse, decision-making
% British English: no comma after i.e. or e.g.
% we refer to the biological research as behavioural ecology

%\tableofcontents

%\noindent\hrulefill

\section{Introduction}
%
%%Uncertainty in payment time has relevance to real-world problems
%Decision-making problems often involve uncertainty in payment time rather than uncertainty in payment amount. Such problems include investing in an $\text{R}\&\text{D}$ venture with unknown completion date,
%%exercising an American option,
%deciding on an optimal payment schedule of a life assurance with a specified death benefit or choosing a delivery at a random or guaranteed date.
%%or gambling on the timing of specific events such as the first throw-in in a football match.
%While uncertainty in payment time is crucial for many real-world problems, both theoretical and empirical literatures on decision-making are mostly focused on uncertainty in payment amount.

Real-world economic decisions are usually a combination of at least two types of uncertainties.
First, uncertainty about the exact payment \textit{amount}, and second, uncertainty about the exact payment \textit{time}.
Examples for the latter include
investing in an $\text{R}\&\text{D}$ venture with unknown completion date,
exercising an American option,
deciding on an optimal payment schedule of a life assurance with a specified death benefit,
%in the recent case of vaccines receiving the payment for the production and delivery once regulations allow for it, 
or choosing a delivery at a random or a guaranteed date.
Uncertainty about the exact timing of events also abounds in many decisions that involve expenditures or losses like climate catastrophes with hard-to-predict timing but more or less known costs.
While uncertainty about payment time is crucial for many real-world problems, both theoretical and empirical literatures on decision-making are mostly focused on random payment amounts.

%Theory predicts humans should be RSTL
A recent paper by \textcite{DeJarnetteETAL2020} coined the term \textit{time lotteries}. These are lotteries in which the payment amount is known with certainty, but the payment time is random. Expected Discounted Utility theory (EDUT) predicts decision makers to be risk-seeking over time lotteries -- they would choose the lottery that corresponds to higher uncertainty in the payment time (for the same expected payment time).\footnote{Like \textcite{DeJarnetteETAL2020}, ``we interpret outcomes [\ldots] as representing goods that are consumed on date $t$ rather th[a]n prizes received in a certain date that need not coincide with the time of actual consumption.'' We thus follow the Money Earlier or Later framework \parencite{CohenETAL2020}, keeping in mind the existing controversy in the literature regarding its relevance to decisions made in the context of consumption rather than money payments.}

%Evidence shows humans are not RSTL
Yet, the experimental evidence on decision makers' risk preferences in time lotteries contrasts this prediction. \textcite{ChessonViscusi2003,OnayOnculer2007,Ebert2018,DeJarnetteETAL2020} all find heterogeneous results that contradict the prediction of EDUT. At most, decision makers in experiments generally preferred lotteries with lower uncertainty in payment time.

% This paper tries to explain the puzzle
This paper studies risk preferences in time lotteries by exploring a normative decision model that differs from EDUT in its optimand: the growth rate of the decision maker's wealth. More specifically, we study time lotteries under growth-optimality, \ie when decision makers maximise the long-term growth rate of their wealth. The paper further aims to reconcile the experimental evidence with a theory of choice.

%The difference in growth-optimality is that a decision maker chooses between two time lotteries such that the growth rate of her wealth is maximised.
% Growth-optimality assumes no behavioural bias and does not violate standard axioms of choice.

% \MK{Comment on Growth-Optimality around here, however consumption-related decision-making is treated as if it is different than investment-related decision-making, where GO is known, maybe cite Fernholz/Cover/etc}

%Two ways to calculate growth rates, two predictions
In time lotteries, the growth rate of wealth is a random variable. To make a growth-optimal choice between time lotteries it is thus necessary to collapse the random growth rate into a scalar. This can be done by taking different averages, leading to two approaches for calculating the growth rate associated with a time lottery.
In the first approach, the scalar growth rate associated with a time lottery is the \textit{time-average} growth rate, experienced by the decision maker if the lottery is indefinitely repeated over time. 
We call this the time approach.
Maximisation of this growth rate predicts risk-neutral behaviour in time lotteries. The time approach is the implementation of ``ergodicity economics'' \parencite{Peters2019b} to derive an optimal decision behaviour from the decision maker's experience over time. Studying time lotteries with the maximisation of the time-average growth rate as a choice criterion is the focus of this paper.

In the second approach, the scalar growth rate is derived as the expected value of the random growth rate or synonymously the \textit{ensemble-average} growth rate. A decision maker experiences this growth rate as if the lottery is simultaneously realised infinitely many times, which might be impossible under most circumstances. We call this approach the ensemble approach. Maximising the ensemble-average growth rate is equivalent to maximising the expected change in discounted utility, \ie to EDUT. It thus predicts risk-seeking behaviour in time lotteries.

%Reconciling the two approaches and data
We then use the experimental results of \textcite{OnayOnculer2007,DeJarnetteETAL2020} to reconcile the predictions of the two approaches. For the experiment conducted by \textcite{DeJarnetteETAL2020} we find that when the difference between the ensemble-average and time-average growth rates was small, decision makers were neither significantly risk-seeking nor risk-averse. Since the time approach predicts decision makers will be risk-neutral in time lotteries, it accords better with the experimental evidence than the ensemble approach.

As observed in the literature, decision makers' choices substantially deviated from the prediction of the ensemble approach. Yet, the difference between the ensemble-average and time-average growth rates may still be informative on these choices. As the ensemble-average growth rate got larger with respect to the time-average growth rate, decision makers were more likely to prefer less risky lotteries in experiments. Thus, the higher the ensemble-average growth rate was relative to the growth rate associated with a sure payment, the less attractive it became. The reanalysis of the experimental data suggests that decision makers may not consider the ensemble-average growth rate, equivalent to EDUT, as a relevant criterion for their choices. This result is found for both experiments done by \textcite{OnayOnculer2007} and \textcite{DeJarnetteETAL2020}.

Our main contribution is the rationalisation of risk-seeking, risk-neutral and risk-averse behaviour in time lotteries in a normative model with a single choice criterion. We emphasise that growth-optimality satisfies standard axioms of choice \parencite{vonNeumannMorgenstern1944}. It assumes neither behavioural bias nor dynamic inconsistency in the decision maker's behaviour. At all times she prefers the option with the highest growth rate.

This paper also provides a framework for future experiments. It is possible to design choice problems between time lotteries for which one lottery is preferred in the ensemble approach and the other in the time approach. Such experiments, based on the model provided in this paper, are able to confirm or falsify the two approaches, and are planned.

In addition, this paper contributes to a growing body of work in ergodicity economics. Our analysis reveals in a novel way a conceptual problem in the implicit assumption of ergodicity in decision theory \parencite{Peters2019b}, which manifests in using expected values as optimands.
\person{Boltzmann} called the ergodic hypothesis and the interchangeable use of time averages and ensemble averages a ``Kunstgriff'' (trick) because only under ergodicity are the two types of averages identical, such that an expected value captures the behaviour over time \parencite[345]{Boltzmann1872}.
This trick fails in the context of time lotteries. It is impossible to resurrect any decision criterion based on expected values, like in EDUT, if the uncertainty appears in the time domain. Our findings thus reinforce the conceptual insight of ergodicity economics to derive an optimal decision behaviour from the decision maker's experience over time.

% Plan
The paper is organised as follows. \secref{model} sets out the choice problem and the theoretical framework for analysing time lotteries. In \secref{EvaluatingTL} we describe how growth-optimality is used to evaluate time lotteries and determine the risk preferences predicted by the ensemble and time approaches. \secref{gDiff} reconciles growth-optimality with existing experimental evidence on risk preferences in time lotteries and presents distinguishing experimental setups able to falsify the predictions of the time approach. We conclude in \secref{Discussion} with a discussion on the significance of the results.

% \MK{Tell what you'll tell 'em, then tell 'em and conclude by telling them what you told 'em}

\subsection{Related Literature}

% Econ Literature on Time Lotteries
\paragraph{Economics}
Research on time lotteries appears under different aliases in the economic literature, such as lottery timing, time ambiguity, uncertain delay \parencite{ChessonViscusi2003}, timing risk \parencite{OnayOnculer2007}, delay risk \parencite{Ebert2018}, time risk \parencite{Ebert2020} and money earlier or later tasks \parencite{CohenETAL2020}.
We adopt the term \textit{Time Lottery} coined recently in \textcite{DeJarnetteETAL2020}.

% Why TL are interesting
%From a theoretical point of view, time lotteries offer a way to disentangle different effects of uncertainty on decision-making. First, the effect of uncertainty in payment time -- the case of time lotteries studied in this paper. This complements the second, well-studied effect of uncertainty in payment amount or the utility from such payments. Research on time lotteries so far has been done predominantly within expected utility theory (EUT) and often appears in the context of discounting.

As mentioned above, EDUT predicts risk-seeking behaviour in time lotteries (RSTL). The experimental studies thus emphasise any finding of deviant risk-averse behaviour in time lotteries (RATL). Indeed, in all of these experiments \parencite{ChessonViscusi2003,OnayOnculer2007,Ebert2018,DeJarnetteETAL2020} a large fraction of subjects were risk-averse. Yet, there were many risk-seeking subjects across all studies. We show how growth-optimality offers an informative analysis of the experimental evidence in \secref{gDiff}.

In an unincentivised survey sent to 373 business owners and managers, \textcite{ChessonViscusi2003} explain the deviations from the EDUT prediction by ambiguity aversion.
\textcite{OnayOnculer2007} performed incentivised, yet inconsequential, experiments with 55 economics undergraduates.
They find a large fraction of RATL decision makers, though not in all the conducted experiments, and especially not when timing risk is large.
They suggest such preferences might be explained by probability weighting.
\textcite{Ebert2018} analysed the choices of 80 students and found substantial heterogeneity of risk preferences over time lotteries.
\textcite{DeJarnetteETAL2020} performed incentivised experiments with 196 students and recruited 156 participants on Amazon Mechanical Turk.
In line with the previous literature, they report consistent violations of the EDUT prediction, and find a significant fraction of RATL decision makers.
To rationalise their findings they propose a generalisation of EDUT, adding additional curvature by wrapping the discount function in another function to absorb the excess risk aversion.

In summary, the remedies offered so far in the literature to the surprising experimental results are incompatible with EDUT. They either suggest behavioural explanations or add free parameters. Overall, however, both theoretical and experimental research on human behaviour in time lotteries is scant, especially when compared to related work on discounting or on uncertainty in payment amount.

%First, from an experimental perspective, the deviations from EDU are attributed to known behavioural biases, such as probabilistic risk aversion and nonlinear weighting of observed probabilities. Second, from a theoretical perspective, current attempts to rationalise risk-aversion in time lotteries are incompatible with EDU. Overall, however, both theoretical and experimental research on human behaviour in time lotteries is scant, especially when compared to related work on discounting, or on uncertainty in payment amount.

\paragraph{Behavioural Ecology}
% Much attention in Behavioural Ecology
Contrary to this, in the context of optimal foraging theory in behavioural ecology, animal behaviour in time lotteries has been extensively studied over many species. Models in optimal foraging theory involve different optimands, referred to as currencies. These optimands are usually determined experimentally and differ across species and environmental conditions. In the classical model of optimal foraging the optimand is the long-term average rate of energy gain \parencite[7]{StephensKrebs1986}. Adding constraints to the model renders competing models and optimands plausible, such as maximisation of the optimal use of time, total food uptake, opportunity cost, mean time to reach satiation or specific nutrient maximisation \parencite{Gross1986}. Ultimately, the optimand strictly determines a suitable fitness criterion.

% Geometric Mean Fitness 
% For instance, the mathematical form of the growth rate under multiplicative dynamics is a geometric mean, which thus appears naturally as a fitness criterion for genotype selection on evolutionary time scales \parencite{Cohen1966,LewontinCohen1969} and in bet-hedging models \parencite{SaetherEngen2015}.

In \secref{EvaluatingTL} we study a model of growth rate maximisation. We present two approaches to find a suitable scalar growth rate. One is equivalent to EDUT, and the other is based on the maximisation of the time-average growth rate. It turns out that both approaches are used to explain risk preferences in behavioural ecology and have well-understood currency analogues:
\bi
% Time perspective = RoE
\item The time approach of growth rate maximisation (see \ssref{TimePerspective}) corresponds to an optimand known as the ratio of expectations (RoE), \ie the ratio of the expected gain over the expected delay \parencite[\pno~314, eq. 1]{BatesonKacelnik1995}.
% Ensemble perspective = EoR
\item The standard model in economics, EDUT, follows the ensemble approach, see \ssref{EnsemblePerspective}. This growth rate corresponds to an optimand known as expectation of ratios (EoR) of gains over delay \parencite[\pno~314, eq. 2]{BatesonKacelnik1995} or per patch rate maximisers \parencite[Box 2.1]{StephensKrebs1986}.
\ei
The difference between the two approaches matters greatly for at least two reasons. First, they lead to different values and thus to different predicted behaviours in time lotteries. Second, they highlight the inability of arbitrary growth rates to correspond to real-world experiences. This correspondence is clearly desirable if a model aims to capture not only choices or behaviours, but also mechanisms and logic.

The importance of real-world relevance of any proposed mathematical behavioural model is well understood in behavioural ecology.
For example, \textcite[p.~16, see esp. Box 2.1]{StephensKrebs1986} mention that there is no clear mechanism generating the associated time interval of the ensemble-average growth rate.
This is in line with \ssref{EnsemblePerspective}, where we show that it is possible to construct a growth rate which is equivalent to EDUT preferences, however there is no realistic mechanism that could generate this particular growth rate. 
Furthermore, it is known that the difference between the EoR and the RoE is a consequence of \person{Jensen's} inequality \parencite{Jensen1906,BatesonKacelnik1996}.
\person{Jensen's} inequality reappears in our proof of \propref{rstl} which establishes the equivalence of EDUT and the ensemble approach.
Paradoxical statements result from the erroneous treatment of \person{Jensen's} \textit{inequality} as an \textit{equality} and have led to the \textit{Fallacy of Averages} \parencite{TempletonLawlor1981,Smallwood1996}.

% Empirical Finding RSTL
Studies on non-human animals report an extremely stable pattern of exclusively RSTL preferences for a wide range of species \parencite[see the surveys of][]{BatesonKacelnik1995,KacelnikBateson1996}. The contrasting behaviour of humans (varying shares of RATL) and non-human animals (all RSTL) is interesting in its own right, yet it exceeds the scope of this paper.

% Relation to EE research programme
\paragraph{Ergodicity Economics}
Our work also contributes to the growing field of ergodicity economics, exploring decision-making under the postulate that decision makers maximise the long-time growth rate of resources \parencite{PetersGell-Mann2016,PetersAdamou2018c,BermanPetersAdamou2019,Peters2019b}. The conceptual focus of ergodicity economics is the embedding of uncertainty within time and not in the ensemble.
Despite conceptual differences, ergodicity economics is capable of mechanistically resurrecting EDUT as a special case if the appropriate utility function happens to generate an ergodic observable in the expectation operator, see \textcite{PetersAdamou2018a,AdamouETAL2021}.
In the case of time lotteries, when uncertainty appears in the time domain it enters the growth rate calculation in the denominator. This makes it impossible to find a corresponding mapping using the expected value, see \ssref{Kunstgriff}.
Thus from a theoretical perspective, our treatment of timing uncertainty complements the ergodicity economics literature.

From an experimental perspective, our reanalysis of data in \secref{gDiff} joins recent experimental evidence of strong dependence on wealth dynamics in human decisions under uncertainty \parencite{MederETAL2019} and may be used to design similar experiments outlined in \ssref{DistinguishingExperiments}, which are planned.

%However, the time approach predicts risk-neutrality, notoriously this makes time lotteries not a particular strong proving ground for falsification. Still the application of the core idea put forth in ergodicity economics, namely embedding the effects of uncertainty on decision-making in historical time, generates new insights on how to reconcile decision theoretical reasoning with empirical evidence.
% We discuss this in greater detail in \secref{Discussion}.

\section{Theoretical Framework} \seclabel{model}
%\subsection{Problem definition}
We begin by formalising a choice problem between \textit{time lotteries}. A decision maker has to choose between two possible gambles, or time lotteries, which involve fixed payments at either certain or uncertain timing. We follow the formalism introduced in \textcite{DeJarnetteETAL2020} with small modifications to notation.

\begin{definition}[Timed Payment]
A Timed Money Payment \TP is an ordered pair $\left(t, \Dx\right)$, which denotes receiving an amount of money \Dx at time $t$.
\end{definition}

The payment amount and the payment time can be negative in general. However, for concreteness we confine our attention to positive payments ($\Dx > 0$) and only to future payments, which are the most commonly considered dilemmas. Timed payments can now be used to define a time lottery.

%\AAA{re-define the tuple as a triple $\left(\TP_1, \TP_2, p\right)$}

\begin{definition}[Time Lottery]
Consider two timed payments $\TP_1$ and $\TP_2$, which correspond to the same payment \Dx at times \tI and \tII and a probability $0 \leq p \leq 1$ to receive the payment at the earlier time $\tI \left(\leq \tII\right)$. Then the tuple $\left(\tI, \tII, p, \Dx\right)$ defines a time lottery \TL. A time lottery provides a decision maker with the option to receive a certain payment \Dx either at \tI (with probability $p$) or the same payment at \tII (with probability $1-p$).

For every time lottery \TL there exists a unique corresponding degenerate time lottery \DTL for which the payment \Dx is received with certainty at the expected payment time, $\EA{t} = p\tI + \left(1-p\right)\tII$. Thus any degenerate time lottery \DTL is a timed payment.
\end{definition}

The setup assumed in the definitions above is presented in \fref{TLSetup}, which illustrates the choice problem between a risky time lottery \TL and its corresponding riskless degenerate time lottery \DTL.

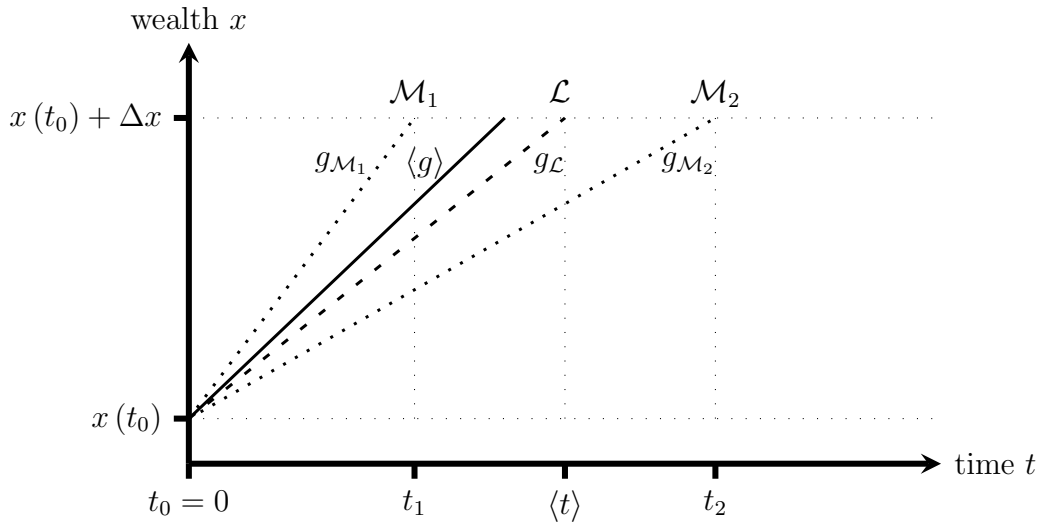
\begin{figure}[htb]
\begin{center}
	\begin{tikzpicture}[
	% every node/.style={draw,shape=circle,fill=blue},
	>=stealth																						% style of arrow tips
	]
		\draw [dash pattern=on \pgflinewidth off 5pt] (0,0) -- (10,0);
		\draw [dash pattern=on \pgflinewidth off 5pt] (0,4) -- (10,4);
		\draw [->, line width=0.8mm] (0,-0.6) -- (10,-0.6)	node[right,xshift=0pt]	{time $t$};
		\draw [->, line width=0.8mm] (0,-0.6) -- (0,5) 		node[above,yshift=0pt]	{wealth $x$};
		
		\draw[-,line width=0.8mm] (0,-0.8)--(0,-0.6) node[below, yshift=-5pt]{$t_0 = 0$};
		\draw[-,line width=0.8mm] (3,-0.8)--(3,-0.6) node[below, yshift=-5pt]{$\tI$};
		\draw[-,line width=0.8mm] (5,-0.8)--(5,-0.6) node[below, yshift=-5pt]{$\EA{t}$};
		\draw[-,line width=0.8mm] (7,-0.8)--(7,-0.6) node[below, yshift=-5pt]{$\tII$};

		\draw[-,line width=0.8mm] (-0.2,0)--(0,0) node[left, xshift=-5pt]{$x\left(t_0\right)$};
		\draw[-,line width=0.8mm] (-0.2,4)--(0,4) node[left, xshift=-5pt]{$x\left(t_0\right) + \Dx$};
		
		\draw [dash pattern=on \pgflinewidth off 5pt] (3,0) -- (3,4) node [above] at (3,4)	{$\TP_1$};
		\draw [dash pattern=on \pgflinewidth off 5pt] (5,0) -- (5,4) node [above] at (5,4)	{\DTL $\vphantom{\TP_1}$};
		\draw [dash pattern=on \pgflinewidth off 5pt] (7,0) -- (7,4) node [above] at (7,4)	{$\TP_2$};
		
		\coordinate (origin) at (0,0);
		\coordinate (x_t1) at (3,4);
		\coordinate (exp_t) at (5,4);
		\coordinate (x_t2) at (7,4);
		\coordinate (bogus) at (4.2,4);

		\draw [loosely dotted,line width=0.4mm] (origin) -- (x_t1) node[anchor=east,pos=0.85] {\gI};
		\draw [line width=0.4mm]	(origin) -- (bogus)	node[anchor=east,pos=0.85] {\EAgr};
		\draw [loosely dashed,line width=0.4mm] (origin) -- (exp_t) node[anchor=west,pos=.85,xshift=5pt] {\DTLgr};
		\draw [loosely dotted, line width=0.4mm] (origin) -- (x_t2) node[anchor=west,pos=0.85,xshift=5pt] {\gII};
	 \end{tikzpicture}
\caption{The time lottery \TL and its corresponding degenerate time lottery \DTL. In a time lottery \TL a certain payment \Dx is received either at an earlier time \tI with probability $p$ or at a later time \tII with probability $1-p$.
The depicted slopes correspond to the growth rates of the corresponding timed payments.
Receiving the payment at the earlier (later) time \tI (\tII) leads to the growth rate \gI (\gII). The ensemble average of these two growth rates associated with the time lottery leads to the ensemble-average growth rate \EAgr.
In the degenerate time lottery, \DTL, the payment \Dx is received with certainty at $\EA{t} = p\tI + \left(1-p\right)\tII$, which yields the growth rate \DTLgr.}
\flabel{TLSetup}
\end{center}
\end{figure}

Our aim is to provide a theory of how decision makers choose between time lotteries, and we define formal criteria for such choices in the next section. Therefore, we continue to describe the problem setup and the necessary definitions for evaluating time lotteries. An important property of preference relations over time lotteries is whether they are risk-averse or risk-seeking. Following \textcite{DeJarnetteETAL2020}, we define risk-averse behaviour over time lotteries.

\begin{definition}[Risk-Averse over Time Lotteries]

The relation $\succsim$ is Risk-Averse over Time Lotteries (RATL) if for all payments \Dx, for all times \tI, \tII, where $\tI \leq \tII$, and for all probabilities $0 \leq p \leq 1$ then
\be \elabel{RATL}
	\DTL \succsim \TL ~.
\ee
\end{definition}

Similarly, a relation $\succsim$ would be ``Risk Seeking over Time Lotteries (RSTL) or Risk Neutral over Time Lotteries (RNTL) if the above holds with $\precsim$ or $\sim$, respectively'' \parencite[4]{DeJarnetteETAL2020}. Put simply, a decision maker is RATL if her preferences are best described by a RATL preference relation, \ie if she would always prefer a certain timed payment over receiving the same amount at an uncertain time, with the same expected payment time as the certain timed payment. A RSTL decision maker would always prefer the lottery with the uncertain time in such a case.

It is also possible to compare two time lotteries with the same payment amount and the same expected payment time, but different payment time risk, \ie variance of payment time. In such a case, we could consider the time lottery with the larger variance of payment time as more risky. Thus, preference to lotteries with smaller variance of payment time would be a natural extension to the definition of RATL. Growth-optimality will provide a decision criterion general enough to predict such behaviours as well.

\section{Evaluating Time Lotteries} \seclabel{EvaluatingTL}

% The need of a Decision Criterion
A formal decision theory has to provide a criterion for choosing between two time lotteries. Here we explore what happens if that criterion is maximisation of the growth rate of wealth, or \textit{growth-optimality}. Under this criterion, a time lottery $\TL_1$ is chosen when it corresponds to a higher growth rate of the decision maker's wealth than another time lottery, $\TL_2$, and \textit{vice versa}. This criterion requires defining properly the growth rate that corresponds to a time lottery. In this section we study two possible definitions for such a growth rate and their implications on risk preferences in time lotteries.

A growth rate, \gr, is defined as the scale parameter of time for an underlying dynamic of wealth.\footnote{Assuming wealth follows some dynamic $x\left(t\right)$, it is possible to define a growth rate if there is a transformation $v\left(x\right)$ such that for any time $t$ and time interval $\Delta t$
\be
g = \frac{\Delta v\left(x\right)}{\Delta t}
\ee
is constant, \ie it depends neither on $t$ nor on $\Delta t$ (where $\Delta v\left(x\right)$ is the change in $v$ between $t$ and $t+\Delta t$). This is explained in detail in \textcite{PetersGell-Mann2016,PetersAdamou2018a}. We note that this assumption is inconsequential to our mathematical analysis.}
In the case of time lotteries, as defined above, it is implicit in the literature that the dynamics are additive. Under additive dynamics the change of wealth \Dx over a time interval, between $t$ and $t+\Dt$, is independent of the wealth level at time $t$. In such a case, the wealth, $x$, follows
\be
	x\left(t+\Dt\right) = x\left(t\right) + \Dx	~,
\ee
and the growth rate of wealth is defined as the rate at which wealth changes over the time interval, or
\be
	g = \frac{x\left(t+\Dt\right) - x\left(t\right)}{\Dt} = \frac{\Dx}{\Dt} ~.
\ee
It follows that a timed payment, \TP, defined by the payment $\Dx$ and the payment time $t = t_0 + \Dt$, where $t_0$ is present time and \Dt is a time interval, corresponds to a unique growth rate $\Dx/\Dt$. For simplicity we assume $t_0=0$, so a timed payment \TP corresponds to a growth rate
\be
	g_{\TP} = \frac{\Dx}{t}  ~.
\ee
A time lottery, \TL, is composed of two timed payments, $\TP_1$ and $\TP_2$, corresponding to growth rates $\gI = \nicefrac{\Dx}{\tI}$ and $\gII = \nicefrac{\Dx}{\tII}$, respectively. \TL also defines its unique degenerate time lottery, \DTL, which is a timed payment that corresponds to the growth rate $\DTLgr = \nicefrac{\Dx}{\EA{t}}$. These growth rates are illustrated in \fref{TLSetup}: the slope of the line that connects $\left(t_0, x\left(t_0\right)\right)$ and $\left(\tI, x\left(t_0\right) + \Dx\right)$ is \gI; the slope of the line that connects $\left(t_0, x\left(t_0\right)\right)$ and $\left(\tII, x\left(t_0\right) + \Dx\right)$ is \gII; the slope of the line that connects $\left(t_0, x\left(t_0\right)\right)$ and $\left(\EA{t}, x\left(t_0\right) + \Dx\right)$ is \DTLgr.

To make a growth-optimal choice between time lotteries we would like to define a single growth rate that describes a time lottery. However, the growth rate of wealth in time lotteries is a random variable
\be
	\gr = 
	\begin{cases}
		\gI		=	\frac{\Dx}{\tI}		\qquad \text{with probability $p$}\\
		\gII	=	\frac{\Dx}{\tII}	\qquad \text{with probability $1-p$}  ~,
	\end{cases}
\ee
so it is necessary to collapse the growth rate into a scalar. The scalar growth rates are then used as the criteria to choose between time lotteries.

In general, there are many ways to define a scalar growth rate for a time lottery. We study two possible approaches, the time approach and the ensemble approach. The rationale is to arrive at a growth rate that corresponds to the decision maker's experience of a time lottery. As it turns out, only the time approach truly leads to such a growth rate. Nevertheless, it is still possible that the ensemble approach is in better accordance with decision makers' choices. In \secref{gDiff} we put that to the test using experiments conducted by \textcite{OnayOnculer2007} and \textcite{DeJarnetteETAL2020}.
%As described in \secref{gDiff}, the time approach also accords better with the experimental evidence on decision makers' choices in time lotteries than the ensemble approach.

\subsection{Time Approach} \sseclabel{TimePerspective}

One approach for collapsing the random growth rate of a time lottery into a scalar is the time approach. In this approach the growth rate is defined as the \textit{time-average} growth rate. Technically, this is achieved by taking the long-time limit of the time average of the growth rate. This quantity corresponds to the growth rate experienced by the decision maker if the lottery is indefinitely repeated over time.

Practically, we assume the lottery is sequentially repeated $T$ times and evaluate the growth rate in the limit $T \to \infty$. We assume that in $n_1$ of the times the early payment was realised, and in $n_2 = T - n_1$ of the times it was the later payment.

We denote the time-average growth rate of a time lottery \TL by \TAgr, given by
\begin{align}
	\elabel{TAGR1}
	\TAgr &\equiv \lim_{T\to\infty} ~ \frac{\text{total payment after $T$ rounds}}{\text{total time elapsed after $T$ rounds}} \\
	\elabel{TAGR2}
	&= \lim_{T\to\infty} ~ \frac{T \Dx}{n_1 \tI + n_2 \tII} = \lim_{T\to\infty} ~ \frac{\Dx}{\nicefrac{n_1}{T}\cdot\tI + \nicefrac{n_2}{T}\cdot\tII} \\
	\elabel{TAGR3}
	&= \frac{\Dx}{p\tI + \left(1-p\right)\tII} = \frac{\Dx}{\EA{t}}	~.
\end{align}

The time-average growth rate, \TAgr, is illustrated by the slope of the dashed line in \fref{TLSetup} and denoted by \DTLgr, the notation will become clear shortly.

\subsection{Ensemble Approach} \sseclabel{EnsemblePerspective}

Another approach for collapsing the random growth rate into a scalar is the ensemble approach. Here, the scalar growth rate is defined as the \textit{ensemble-average} growth rate or, synonymously, as the expected value of the growth rate. Technically, this is achieved by taking the large sample limit of the ensemble average of the growth rate, which simply gives the arithmetic mean of the random growth rate.

Practically, we assume the lottery is simultaneously realised $N$ times and denote by $\gr_i$ the $i$-th realised growth rate. As before, we assume that in $n_1$ realisations the early payment was realised and in $n_2 = N - n_1$ realisations it was the later payment. We then evaluate the mean of these growth rates in the limit $N\to\infty$.

We denote the ensemble-average growth rate of a time lottery \TL by \EAgr, given by 
\begin{align}
	\EAgr &\equiv \lim_{N\to\infty}  \frac1N \sum_{i=1}^N \gr_i = \lim_{N\to\infty}  \frac1N \left( n_1 \cdot \gI + n_2 \cdot \gII \right) \\
	&= \lim_{N\to\infty} \left( \frac{n_1}{N} \cdot \frac{\Dx}{\tI} + \frac{n_2}{N} \cdot \frac{\Dx}{\tII} \right) \\
	\elabel{EAgrTL}
	&= p \frac{\Dx}{\tI} + \left(1-p\right) \frac{\Dx}{\tII} = \EA{\frac{\Dx}{t}} ~.
%	\elabel{EAgrEV}
%	&= \EV{\frac{\Dx}{t}} = \EV{\gr} ~.
\end{align}
The ensemble-average growth rate, \EAgr, is illustrated by the slope of the solid line in \fref{TLSetup}. 
This quantity describes the decision maker's experience as if $N\to\infty$ simultaneous lotteries were realised, with their outcomes pooled together and divided by $N$.

We note that a decision maker may never experience the ensemble-average growth rate.
This is also illustrated in \fref{TLSetup} by the difference of the slopes of \EAgr and \DTLgr.
The ensemble-average and time-average growth rates differ in the way the decision maker's experience in time lotteries is quantified. It is therefore no surprise that the ensemble-average growth rate of a time lottery, \EAgr, does not coincide with the time-average growth rate, \TAgr, where the former is generally higher. They will only coincide if $\tI = \tII$ or if $p$ is either 0 or 1, \ie if the time lottery comprises only a single timed payment, \ie if it is, in fact, a degenerate time lottery.
% Thus for a single timed payment both approaches yield identical growth rates, because no scalarisation is necessary.
Put differently, there exists no real-world mechanism under the time lottery that would result in the time interval that corresponds to \EAgr.\footnote{Given a time lottery that pays $\Dx$ at either $\tI$ with probability $p$ or at $\tII$ with probability $\left(1-p\right)$, this effective time interval is the time at which the solid line in \fref{TLSetup} crosses the horizontal line $x\left(t_0\right) + \Dx$. It equals $\frac{\tI\tII}{\tI+p\left(\tII - \tI\right)}$. This effective time does not correspond to an actual payment time or an average payment time at which the decision maker receives $\Dx$.}

%Note, from \eref{EAgrEV} we can see that taking the infinite-ensemble average of a random variable is equivalent to its expectation, whose mathematical structure is simply an arithmetic mean. Although an expected value is a familiar quantity, it is less well understood in the literature that by taking an expected value, a decision maker implicitly averaged over the large sample limit of the ensemble of growth rates. As if he experienced $N$ simultaneous realisations of the lottery, pools the results together and receives the average as a result.

% \be \elabel{GRnotequal}
% 	\TAgr \leq \EAgr ~.
% \ee
% \Eref{TAGR3} allows another insight. In the time approach, the growth rate of a time lottery is exactly the same as the growth rate that corresponds to its degenerate time lottery. If we combine the two insights, we find
% \be \elabel{GRDTLequalsGRTL}
% 	\gr = \TAgr \leq \EAgr	~.
% \ee
% The result of the risk-neutrality over time lotteries in the time approach immediately follows from \eref{GRDTLequalsGRTL} which we continue to prove in a moment in \ssref{TimeApproachIsRNTL}. We continue to analyse the risk preferences in the two approaches.

\subsection{The Failure of the ``Kunstgriff''} \sseclabel{Kunstgriff}

The inequality between the scalar growth rates
\begin{equation} \elabel{ErgodicHypothesis}
  \TAgr\neq\EAgr
\end{equation}
implies that the growth rate of wealth is non-ergodic in time lotteries. In the context of statistical mechanics, \person{Boltzmann} referred to ergodicity, \ie the equality of the time average and the ensemble average of an observable (in our case $\TAgr = \EAgr$) and hence their substitutability, as a mathematical ``Kunstgriff'' (trick). This trick turns out to be helpful in simplifying many calculations \parencite[345]{Boltzmann1872}, and was labelled later the ergodic hypothesis.

%Only if the relevant observable is ergodic, then an ensemble average over many systems at a fixed point in time also describes the evolution of a single system over time. By starting from the ergodicity question, ergodicity economics transfers this conceptual insight to economics \parencite{Peters2019b}.

Every decision theory that relies on an expected value in its optimand is paired with the ergodicity hypothesis explicitly or implicitly. Under certain conditions, transforming a non-ergodic observable (such as the wealth of a decision maker) into an ergodic observable yields equivalence between expected utility theory (EUT) and the maximisation of the time-average growth rate \parencite{PetersGell-Mann2016,PetersAdamou2018a}. Under the appropriate transformation of wealth, both theories would give identical predictions of decision makers' choices. Furthermore, this equivalence produces predictions of decision makers' utility functions, which can be experimentally tested \parencite{PetersGell-Mann2016,MederETAL2019}.

In the case of time lotteries, such transformation of wealth does not exist. Since the dynamics are additive the appropriate transformation or utility function is the identity, $u(x) = x$ \parencite{PetersGell-Mann2016,PetersAdamou2018a}. Although the appropriate mapping is applied in \eref{EAgrTL}, the ensemble-average and time-average growth rates are not equal. This is a result of introducing the uncertainty in the payment time and not in the payment amount, where the ergodicity transformation could counteract the non-ergodicity. Thus the uncertainty enters \eref{TAGR3} in the denominator, which makes it impossible to construct an ergodic growth rate. Put differently, there exists no function $u$ of wealth $x$, which induces an equality between the ensemble average on the left-hand side and the time average on the right-hand side of the following equation:
\begin{align}
%	\EAgr &\overset{?}{=} \TAgr \\
	\elabel{KunstgriffI}
	\EA{\frac{\D u(x)}{t}} &\overset{?}{=} \dfrac{\Dx}{\EA{t}} \\
	p \frac{\D u(x)}{\tI} + \left(1-p\right) \frac{\D u(x)}{\tII} &\overset{?}{=} \dfrac{\Dx\displaystyle}{p\tI + \left(1-p\right)\tII} \\
	\elabel{KunstgriffII}
	\D u(x) &\overset{?}{=} \frac{\tI \tII}{(p\tI + \left(1-p\right)\tII)(p\tII + \left(1-p\right)\tI)} \Dx ~.
\end{align}
For the equality in \eref{KunstgriffI} to be obtained it is therefore necessary for $u(x)$ to also depend on the payment times, \ie on the problem setup, rather than on wealth $x$ and its dynamics.
%
%Also any generalisation of EDUT which relies on the optimisation of ensemble averages suffers from this impossibility to make this ergodicity Kunstgriff work in \eref{KunstgriffIII} \parencite{DeJarnetteETAL2020}.

%To correctly treat uncertainty in payment time rather than in payment amount it becomes imperative to maximise time-average quantities. The backdrop of introducing an idiosyncratic (utility) function is naturally circumvented in the time approach following ergodicity economics. In fact, the uncertainty in the time domain is handled naturally by maximising the long-term growth rate and does not rely on any further idiosyncratic functions like a discounting function. The treatment of uncertainty in the time domain is a novel aspect in the context of ergodicity economics, too. It reinforces its central idea of long-term growth-rate of wealth maximisation, rather than of maximisation of expected (possibly transformed) wealth.

\subsection{Using Growth Rates to Evaluate Time Lotteries} \sseclabel{GrowthOptimality}

\subsubsection{Growth-Optimality}
We will now use the definitions above as choice criteria between time lotteries. Choosing between two time lotteries, $\TL_a$ and $\TL_b$, requires evaluating the growth rate associated with each, \ie $\gr_a$ and $\gr_b$. We define a preference relation of a decision maker who chooses growth-optimally between time lotteries.

\begin{definition}[Growth-Optimal over Time Lotteries] \dlabel{GOTL}
The relation $\succsim$ is Growth-Optimal over Time Lotteries (GOTL) if given two time lotteries, \TLa and \TLb, with growth rates $g_a$ and $g_b$, respectively,
\begin{enumerate}
	\item $\TLa \succ \TLb$ \quad[`\TLa is preferred to \TLb']\quad 				if and only if \quad $\gra > \grb$ 
	\item $\TLa \sim \TLb$ \quad[`indifference between \TLa and \TLb']\quad if and only if \quad $\gra = \grb$
	\item $\TLa \prec \TLb$ \quad[`\TLb is preferred to \TLa']\quad 				if and only if \quad $\gra < \grb$ ~.
\end{enumerate}
\end{definition}

In words, a decision maker is GOTL if she prefers lottery \TLa if her wealth grows faster under this choice than under choice \TLb, and vice versa. While intuitive and straightforward, such preferences are not commonly studied in economics, but they adhere to the standard rationality definitions.

\begin{proposition}[Optimization of Growth satisfies von Neumann-Morgenstern axioms]
	If a relation $\succsim$ is GOTL, it satisfies the axioms by \person{von Neumann-Morgenstern}.
\label{prop:trans}
\end{proposition}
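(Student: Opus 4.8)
The plan is to exploit that a GOTL relation is, by \Dref{GOTL}, entirely determined by a single real-valued functional --- the scalar growth rate attached to each time lottery --- so that $\succsim$ is nothing but the pullback of the usual order $\geq$ on $\mathbb{R}$ along the map sending a time lottery to its growth rate. From this one observation the four \person{von Neumann--Morgenstern} axioms can be checked in turn, the order-theoretic ones becoming essentially automatic and the structural ones reducing to elementary properties of this functional.

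\emph{Completeness, transitivity and continuity.} For any \TLa and \TLb the reals \gra and \grb satisfy exactly one of $\gra>\grb$, $\gra=\grb$, $\gra<\grb$, and \Dref{GOTL} assigns these three cases to $\succ$, $\sim$ and $\prec$; hence every pair is comparable, which is completeness. Transitivity is inherited directly from $\geq$ on $\mathbb{R}$: from $\gra\geq\grb$ and $\grb\geq\grc$ one gets $\gra\geq\grc$, so $\TLa\succsim\TLb$ and $\TLb\succsim\TLc$ give $\TLa\succsim\TLc$. Continuity I would obtain by writing the growth rate explicitly as a function of the lottery parameters --- $\Dx/\EA{t}$ in the time approach and $\EA{\Dx/t}$ in the ensemble approach --- and noting that each is continuous in the mixing weight, so the upper and lower contour sets are closed.

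\emph{Independence is the delicate step, and the one I expect to be the main obstacle.} The difficulty is that the time-average growth rate $\Dx/\EA{t}$ is \emph{not} linear in the mixing weight, unlike a textbook cardinal utility; indeed, for a general ratio functional independence would fail if the payment amount were allowed to vary across the lotteries being mixed. The resolution invokes the defining feature of time lotteries, that \Dx is certain: with \Dx held fixed, preference depends on a lottery only through its expected payment time \EA{t}, via the strictly decreasing map $\EA{t}\mapsto\Dx/\EA{t}$, and \EA{t} \emph{is} linear under mixing. Concretely, if $\TLa\succsim\TLb$ then $\EA{t_a}\leq\EA{t_b}$, so for any \TLc and any $\alpha\in[0,1]$ the mixed expected times obey $\alpha\EA{t_a}+(1-\alpha)\EA{t_c}\leq\alpha\EA{t_b}+(1-\alpha)\EA{t_c}$, and applying the decreasing map returns $\alpha\TLa+(1-\alpha)\TLc\succsim\alpha\TLb+(1-\alpha)\TLc$, with the strict and indifferent cases following identically. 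For the ensemble approach the same conclusion is immediate and needs no such care, since $\EA{\Dx/t}$ is literally an expectation and hence linear in the mixing weights, so independence holds by the classical argument. I would present the monotone-functional argument as the main one, since it isolates exactly why the nonlinearity of the denominator does not disturb the axiom once \Dx is fixed.
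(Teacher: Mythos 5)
Your proposal is correct and follows the same skeleton as the paper's appendix proof: completeness and transitivity are pulled back from the order on $\mathbb{R}$, the ensemble approach is handled by linearity of \EAgr in the mixing weight, and the time approach requires care because $\TAgr$ of a mixture is a ratio of linear functions of \natop. Where you genuinely diverge is in the two non-trivial steps. For continuity, the paper solves explicitly for the weight -- $\natop = \frac{\EAgrc-\EAgrb}{\EAgrc-\EAgra}$ in the ensemble case and $\natop = \frac{\Dx_c\ETb - \ETc\Dx_b}{\ETb\left(\Dx_c - \Dx_a\right) + \Dx_b\left(\ETa-\ETc\right)}$ in the time case -- and then verifies $\natop\in\left[0,1\right]$ by sign analysis; your intermediate-value-theorem argument via continuity in \natop reaches the same conclusion with less computation but yields no closed form. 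For independence in the time approach, the paper grinds through a quadratic inequality in \natop and isolates the term $\Dx_a\ETb - \Dx_b\ETa$ whose sign decides the matter; your observation that, once \Dx is held fixed, the preference functional factors as a strictly decreasing map applied to the mixture-linear quantity \EA{t} is a cleaner route to the same equal-payment conclusion, and it correctly isolates why the nonlinearity of the denominator is harmless. Two things you flag but do not supply, which the paper does: an explicit numerical counterexample (its \tref{counterexample}, with $\natop=0.1$) showing that independence really does fail in the time approach when payments differ across the mixed lotteries, and the explicit caveat that the proposition therefore holds in the time approach only for comparisons of time lotteries with a common payment -- note that the paper's definition of a time lottery does allow \Dx to differ across lotteries, so "\Dx is certain" within each lottery is not by itself the needed hypothesis; the restriction $\Dx_a=\Dx_b=\Dx_c$ must be imposed explicitly.
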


\begin{proof}
See Appendix \ref{app:appA}.
\end{proof}

We can now turn to test what growth-optimal preferences mean for decision makers' choices in time lotteries, based on the two approaches described above. 

\subsubsection{Risk-Neutrality over Time Lotteries in the Time Approach} 

\begin{proposition}[Time Approach Predicts RNTL]
% \AAA{derivations are central and short enough, include here}
	In the time approach, if a relation $\succsim$ is GOTL, then $\succsim$ is RNTL (risk-neutral over time lotteries).
\label{prop:rntl}
\end{proposition}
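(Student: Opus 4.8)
The plan is to observe that, under the time approach, the scalar growth rate assigned to any time lottery is by construction identical to the growth rate of its corresponding degenerate time lottery, so that the GOTL criterion collapses the two into indifference. Concretely, I would fix an arbitrary time lottery $\TL = \left(\tI, \tII, p, \Dx\right)$ with $\tI \leq \tII$ and $0 \leq p \leq 1$, and let \DTL denote its unique degenerate counterpart.

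First I would recall from \eref{TAGR3} that the time-average growth rate associated with \TL is
\[
  \TAgr = \frac{\Dx}{p\tI + \left(1-p\right)\tII} = \frac{\Dx}{\EA{t}} .
\]
Next I would note that \DTL is the timed payment delivering \Dx with certainty at the expected payment time $\EA{t} = p\tI + \left(1-p\right)\tII$, whose growth rate is therefore $\DTLgr = \nicefrac{\Dx}{\EA{t}}$. Comparing the two expressions gives the single identity that drives the whole argument, namely $\TAgr = \DTLgr$.

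The remaining step is to feed this equality into \Dref{GOTL}. Since the two lotteries \TL and \DTL carry equal growth rates, clause~2 of the GOTL definition (indifference if and only if the growth rates coincide) yields $\TL \sim \DTL$. Because \Dx, the times $\tI \leq \tII$, and the probability $p$ were arbitrary, the indifference $\DTL \sim \TL$ holds for every admissible configuration, which is exactly the RNTL condition of \eref{RATL} read with $\sim$ in place of $\succsim$. Hence $\succsim$ is RNTL.

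I do not anticipate a genuine obstacle here: the entire content is the coincidence $\TAgr = \DTLgr$, which holds precisely because the time-average growth rate of \TL was derived in \ssref{TimePerspective} as $\nicefrac{\Dx}{\EA{t}}$, the very rate defining \DTL. The only point meriting a word of care is the degenerate boundary ($p \in \{0,1\}$ or $\tI = \tII$), where \TL already reduces to a single timed payment and the identity is immediate, so no case is lost.
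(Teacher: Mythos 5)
Your proposal is correct and follows essentially the same route as the paper's own proof: both identify that the time-average growth rate of \TL equals $\nicefrac{\Dx}{\EA{t}}$, which is exactly the growth rate $\DTLgr$ of the degenerate lottery, and then invoke the indifference clause of \Dref{GOTL} to conclude $\TL \sim \DTL$. The only (harmless) addition on your side is the explicit remark about the boundary cases $p \in \{0,1\}$ and $\tI = \tII$, which the paper leaves implicit.
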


%\subsubsection*{Proof of \propref{rntl}}
\begin{proof}
\sseclabel{TimeApproachIsRNTL}

We would like to show that in the time approach, if a relation $\succsim$ is growth-optimal, then $\succsim$ is RNTL (risk-neutral over time lotteries).

% For that purpose we look at a general time lottery \TL, which is composed of two timed payments, $\TP_1$ and $\TP_2$, corresponding to growth rates $\gI = \nicefrac{\Dx}{\tI}$ and $\gII = \nicefrac{\Dx}{\tII}$, respectively. The time lottery \TL also defines its unique degenerate time lottery, \DTL, which is a timed payment that corresponds to the growth rate $\gr = \nicefrac{\Dx}{\EA{t}}$. We also assume that the time lottery \TL is not a timed payment, \ie that $\tI \neq \tII$ and $p \notin \{0,1\}$.
To show that $\succsim$ is RNTL, a decision maker must be indifferent between the time lottery and its degenerate time lottery, $\TL \sim \DTL$.
In the time approach the growth rate of the time lottery \TL is given by \eref{TAGR1}:
\begin{align}
	\TAgr = \frac{\Dx}{p\tI + \left(1-p\right)\tII} = \frac{\Dx}{\EA{t}}	~,
\end{align}

The growth rate of the degenerate time lottery \DTL is $\DTLgr = \nicefrac{\Dx}{\EA{t}}$. Hence, the time-average growth rate of the risky time lottery coincides with the growth rate of its riskless degenerate time lottery. Because $\succsim$ is growth-optimal, $\TL \sim \DTL$.

%\qed
\end{proof}

Thus, maximising the time-average growth rate predicts decision makers would be indifferent between risky time lotteries and their corresponding riskless degenerate time lotteries.
The ensemble approach yields a different prediction of growth-optimal behaviour.

\subsubsection{Risk-Seeking Behaviour over Time Lotteries in the Ensemble Approach} 

\begin{proposition}[Ensemble Approach Predicts RSTL]
% \AAA{derivations are central and short enough, include here}
	In the ensemble approach, if a relation $\succsim$ is GOTL, then $\succsim$ is RSTL (risk-seeking over time lotteries).
\label{prop:rstl}
\end{proposition}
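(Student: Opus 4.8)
The plan is to reduce the claim to a single scalar inequality between growth rates and then recognise it as an instance of \person{Jensen's} inequality. Since $\succsim$ is GOTL, \Dref{GOTL} tells us that the preference $\DTL \precsim \TL$ -- which is exactly the defining condition of RSTL -- holds if and only if $\DTLgr \leq \EAgr$, where in the ensemble approach the scalar growth rate attached to the risky lottery is $\EAgr$. Thus the whole proposition collapses to establishing this ordering of two real numbers.

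First I would write both growth rates explicitly. The degenerate lottery's rate is $\DTLgr = \nicefrac{\Dx}{\EA{t}}$ with $\EA{t} = p\tI + \left(1-p\right)\tII$, while the ensemble-average rate from \eref{EAgrTL} is $\EAgr = \EA{\nicefrac{\Dx}{t}}$. Because $\Dx > 0$, dividing through by $\Dx$ reduces the target inequality $\DTLgr \leq \EAgr$ to
\begin{equation}
	\frac{1}{\EA{t}} \leq \EA{\frac{1}{t}} ~.
\end{equation}
Next I would invoke \person{Jensen's} inequality with the map $\phi(t) = \nicefrac{1}{t}$, which is strictly convex on $\left(0,\infty\right)$ since its second derivative $\nicefrac{2}{t^3}$ is positive and all payment times are positive under the standing assumption of future payments. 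Jensen then gives $\phi\left(\EA{t}\right) \leq \EA{\phi(t)}$, that is $\nicefrac{1}{\EA{t}} \leq \EA{\nicefrac{1}{t}}$, and hence $\DTLgr \leq \EAgr$. By \Dref{GOTL} this yields $\DTL \precsim \TL$, the defining condition of RSTL.

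The only point requiring care is the direction of convexity, and hence of the inequality, together with the observation that it is strict whenever $\tI < \tII$ and $0 < p < 1$, i.e.\ for a genuinely non-degenerate lottery. This strictness is what distinguishes risk-seeking from mere risk-neutrality, and it makes transparent the earlier remark that the gap between the ensemble-average and time-average growth rates is a manifestation of \person{Jensen's} inequality. I anticipate no substantive obstacle beyond this bookkeeping, since the additive dynamics force the identity transformation on wealth and leave the entire effect to the convexity of $\nicefrac{1}{t}$ in the denominator.
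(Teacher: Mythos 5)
Your proposal is correct and follows essentially the same route as the paper's own proof: both reduce the claim to the scalar inequality $\DTLgr \leq \EAgr$ and establish it via \person{Jensen's} inequality applied to the convex map $t \mapsto \nicefrac{1}{t}$ on the positive payment times. Your explicit attention to strictness under $\tI < \tII$ and $0 < p < 1$ matches the paper's standing assumption that the lottery is non-degenerate, so there is nothing substantive to add.
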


%\subsubsection*{Proof of \propref{rstl}}
\begin{proof}

We would like to show that in the ensemble approach, if a relation $\succsim$ is growth-optimal, then $\succsim$ is RSTL (risk-seeking over time lotteries).

For that purpose we look at a general time lottery \TL, which is composed of two timed payments, $\TP_1$ and $\TP_2$, corresponding to growth rates $\gI = \nicefrac{\Dx}{\tI}$ and $\gII = \nicefrac{\Dx}{\tII}$, respectively. The time lottery \TL also defines its unique degenerate time lottery, \DTL, which is a timed payment that corresponds to the growth rate $\DTLgr = \nicefrac{\Dx}{\EA{t}}$. We also assume that the time lottery \TL is not just a timed payment, \ie that $\tI \neq \tII$ and $p$ is neither 0 nor 1. To show that $\succsim$ is RSTL, \TL must be preferred to \DTL.

In the ensemble approach the growth rate of \TL is given by \eref{EAgrTL}:
\be
	\EAgr = \EA{\frac{\Dx}{t}} = p \frac{\Dx}{\tI} + \left(1-p\right) \frac{\Dx}{\tII}  ~.
\ee
We define $f\left(z\right) = \nicefrac{1}{z}$, thus $f$ is a convex function and from \person{Jensen's} inequality follows
\begin{align}
  f\big(\EA{t}\big) \quad &< \quad \EA{f(t) \vphantom{^2} }	\\
  \elabel{Jensen1}
  \Dx \frac{1}{\EA{t}} \quad &< \quad \Dx \EA{\frac1t}		\\
  \frac{\Dx}{\EA{t}} \quad &< \quad p \frac{\Dx}{\tI} + \left(1-p\right) \frac{\Dx}{\tII}	\\
  \elabel{TAlowerEA}
  \DTLgr \quad &< \quad \EAgr	~.
\end{align}
Because $\succsim$ is growth-optimal, the time lottery is preferred over the corresponding degenerate time lottery, $\TL \succ \DTL$.

%\qed
\end{proof}

Thus, maximising the ensemble-average growth rate means that decision makers would prefer risky time lotteries over their corresponding riskless degenerate time lotteries.\footnote{This is unless $g_a = g_b$. In the ensemble approach this happens only if $\tI=\tII$, or if $p \in \{0,1\}$, \ie the time lottery is a timed payment.} This is mathematically equivalent to maximising expected discounted utility, $\EV{D(t)\Delta u(x))}$, for linear utility ($u\left(x\right)=x$) and hyperbolic discounting ($D(t)=\nicefrac1t$). As described in \textcite{PetersAdamou2018a,AdamouETAL2021}, these are indeed the appropriate functions for the specified problem and dynamics. Thus, this coincides with the prediction of EDUT that expected discounted utility maximisers must be RSTL for any utility function and any convex discount function.

It is worth repeating the main argument from \ssref{Kunstgriff} about the failure of the ergodicity trick here.
There exists no concave utility function $u$ which if inserted in the right-hand side of \eref{Jensen1} induces an equality.

The ensemble and time approaches not only result in different growth rates but also lead to different risk preferences in time lotteries.
We can now put the two decision criteria -- the ensemble-average growth rate and the time-average growth rate -- to the test by revisiting experiments conducted by \textcite{OnayOnculer2007} and \textcite{DeJarnetteETAL2020}. We would like to find out which of the two, if any, is in accordance with the experimental evidence.

\section{Revisiting Experimental Evidence on Time Lotteries} \seclabel{gDiff}

The growth-optimality criterion makes two predictions on risk preferences in time lotteries, depending on how growth rates are computed. We have just established that in the ensemble approach decision makers must be risk-seeking over time lotteries. Practically, when facing a choice between a time lottery and its corresponding degenerate time lottery, the ensemble approach predicts a preference for the risky lottery. In the case of a choice between two time lotteries with the same payment amount and the same expected payment time, but a different payment time risk, decision makers must prefer the lottery with the higher variance in payment time.

The time approach predicts decision makers are risk-neutral over time lotteries. This means decision makers must be indifferent between a time lottery and its corresponding degenerate time lottery, and between lotteries with the same payment amount and the same expected payment time. If decision makers are forced to choose, the time approach does not provide a prediction for this choice. It may be hypothesised that indifference means that subjects are neither significantly risk-seeking nor risk-averse. Yet, it is possible that there are other higher-order effects that would systematically affect subjects' choices and are not captured by our model of choice alone.

We put these predictions to the test using evidence from two experiments \parencite{OnayOnculer2007,DeJarnetteETAL2020}. For each experiment we calculate the corresponding growth rate of every time lottery under the ensemble and time approaches, using \eref{EAgrTL} in the ensemble approach and \eref{TAGR3} in the time approach.

\Tref{experimentDeJarnette} presents the results for Part I of the experiment in \textcite{DeJarnetteETAL2020}. We reproduce Table 2 of their Appendix D adding the corresponding growth rates. In this experiment a total number of 196 subjects had to choose between two time lotteries in ten different settings.\footnote{Each subject had to make only five choices in total and received payment for only one of the five choices, selected randomly. For more details on the experiment please refer to Appendix D in \textcite{DeJarnetteETAL2020}.} In six settings subjects had to choose between a time lottery and its corresponding degenerate time lottery (questions 1--3). In the other four settings the choice was between two time lotteries with the same payment amount and the same expected payment time, but a different payment time risk. In \tref{experimentDeJarnette} option \oI refers to the less risky option (or the degenerate time lottery) and option \oII to the riskier option.

%\begin{table}[!htb]
%\ra{1.25}
%\scriptsize
%\centering
%\captionof{table}{Growth rates for the experimental results in \textcite{DeJarnetteETAL2020} (in $\$/wk$).}\tlabel{experimentDeJarnette}
%%\begin{tabular}{@{}lllll@{}}\toprule[1.5pt]
%\begin{tabular}{lcccccc}
%\addlinespace
%\toprule[1.5pt]
%\addlinespace
%Question & \EAgr Option 1 & \EAgr Option 2 & \TAgr Option 1 & \TAgr Option 2 & $\EAgr-\TAgr$ & \makecell{Fraction of RATL\\subjects (\%)} \\
%\midrule[1.5pt]
%Question 1, long treatment 			& 10.0 & 16.0 & 10.0 & 10.0 &	6.0	& 65.7  \\
%Question 1, short treatment			& 10.0 & 16.0 & 10.0 & 10.0 &	6.0	& 56.0  \\
%Question 2, long treatment 			&  5.0 &  6.9 &  5.0 &  5.0 &	1.9	& 50.5  \\
%Question 2, short treatment			&  5.0 &  9.0 &  5.0 &  5.0 &	4.0	& 55.0  \\
%Question 3, long treatment 			&  5.0 &  6.7 &  5.0 &  5.0 &	1.7	& 48.6  \\
%Question 3, short treatment			&  5.0 &  6.7 &  5.0 &  5.0 &	1.7	& 37.4  \\
%Question 4, long treatment 			&  8.3 & 12.5 &  8.0 &  8.0 &	4.5	& 64.8  \\
%Question 4, short treatment			&  8.3 &  8.8 &  8.0 &  8.0 &	0.8 & 38.5  \\
%Question 5, long treatment 			&  5.3 & 11.6 &  4.3 &  4.3 &	7.3	& 73.3  \\
%Question 5, short treatment			&  3.5 &  3.0 &  2.9 &  2.9 &	0.1	& 52.8  \\
%\bottomrule[1.5pt]
%\end{tabular}
%\end{table}

\begin{table}[!htb]
\ra{1.25}
\scriptsize
\centering
\captionof{table}{Growth rates for the experimental results in \textcite{DeJarnetteETAL2020} (in $\$/wk$).}\tlabel{experimentDeJarnette}
%\begin{tabular}{@{}lllll@{}}\toprule[1.5pt]
\begin{tabular}{lccccc}
\addlinespace
\toprule[1.5pt]
\addlinespace
Question & $\EAgr^{\oI}$ & $\EAgr^{\oII}$ & $\TAgr^{\oI}$ ($=\TAgr^{\oII}$) & $\EAgr^{\oII}-\TAgr^{\oII}$ & \makecell{Fraction of RATL\\subjects (\%)} \\
\midrule[1.5pt]
Question 1, long treatment 			& 10.0 & 16.0 & 10.0 & 6.0	& 65.7  \\
Question 1, short treatment			& 10.0 & 16.0 & 10.0 & 6.0	& 56.0  \\
Question 2, long treatment 			&  5.0 &  6.9 &  5.0 & 1.9	& 50.5  \\
Question 2, short treatment			&  5.0 &  9.0 &  5.0 & 4.0	& 55.0  \\
Question 3, long treatment 			&  5.0 &  6.7 &  5.0 & 1.7	& 48.6  \\
Question 3, short treatment			&  5.0 &  6.7 &  5.0 & 1.7	& 37.4  \\
Question 4, long treatment 			&  8.3 & 12.5 &  8.0 & 4.5	& 64.8  \\
Question 4, short treatment			&  8.3 &  8.8 &  8.0 & 0.8 & 38.5  \\
Question 5, long treatment 			&  5.3 & 11.6 &  4.3 & 7.3	& 73.3  \\
Question 5, short treatment			&  3.5 &  3.0 &  2.9 & 0.1	& 52.8  \\
\bottomrule[1.5pt]
\end{tabular}

\flushleft{\textit{Notes:} In each question subjects had to choose between two options, labelled \oI and \oII. $\EAgr^{\oI}$ ($\EAgr^{\oII}$) refers to the ensemble-average growth rate associated with option \oI (\oII). Similarly $\TAgr^{\oI}$ ($\TAgr^{\oII}$) refers to the time-average growth rate associated with option \oI (\oII). The payment amount and the expected payment time were similar for the two options, so $\TAgr^{\oI} =\TAgr^{\oII}$. In questions 1--3 option \oI was the degenerate time lottery corresponding to option \oII in the respective question. In these cases $\EAgr^{\oI}=\TAgr^{\oI}$, since option \oI was a timed payment. In questions 4 and 5, options \oI and \oII were two non-degenerate time lotteries with the same payment amount and the same expected payment time, while in option \oI the payment time had lower variance than in option \oII. Thus, in all questions option \oI was the less risky time lottery and option \oII the more risky. The fraction of RATL subjects is the fraction of subjects who preferred option \oII to option \oI in each question. The long and short treatments refer to two different experiments with a similar structure but with different parameters, \ie different payment amounts, payment times and probabilities. Questions 1 and 3 were similar in both treatments. For more details on the experiments please refer to Appendix D in \textcite{DeJarnetteETAL2020}.}
\end{table}

\Tref{experimentOnay} shows similar results for Study 1 in \textcite{OnayOnculer2007}. In this study 55 subjects had to choose between time lotteries and their corresponding degenerate time lottery in six different settings. Unlike the experiments in \textcite{DeJarnetteETAL2020}, this experiment was not consequential, and participants only received a flat rate payment for their participation.

\begin{table}[!htb]
\ra{1.25}
\scriptsize
%\tiny
\centering
\captionof{table}{Growth rates for the experimental results in \textcite[Tables 2 \& 5]{OnayOnculer2007} (in $NTL/mth$).}\tlabel{experimentOnay}
\begin{tabular}{lcccccc}
\addlinespace
\toprule[1.5pt]
\addlinespace
Case & \EA{t} ($mths$) & \Dx ($NTL$) & \EAgr & \TAgr & $\EAgr-\TAgr$ & \makecell{Fraction of RATL\\subjects (\%)} \\
\midrule[1.5pt]
1	& 9	& 160	&	43.6	&	27.8	&	25.9	&	22\\
2	& 9	& 140	&	38.2 	&	15.6	&	22.6	&	9\\
3	& 6	& 160	&	87.3	&	26.7	&	60.6	&	62\\
4	& 6	& 140	&	76.4	&	23.3	&	53.0	&	40\\
5	& 2	& 160	&	145.5	&	80.0	&	65.5	&	75\\
6	& 2	& 140	&	127.3	&	70.0	&	57.3	&	93\\
\bottomrule[1.5pt]
\end{tabular}

\flushleft{\textit{Notes:} In all cases subjects had to choose between a time lottery and its corresponding degenerate time lottery. In cases 2, 4 and 6, the question presented to the subjects was framed in terms of a loss of 140 NTL rather than a gain. For comparability with the other experiments we consider absolute amounts. The fraction of RATL subjects in cases 2, 4 and 6 is therefore found by subtracting the ``proportion of subjects that behaved as predicted by DEU'' in \textcite[Tab. 2]{OnayOnculer2007} from $100\%$. For more details on the experiments please refer to Section 2 in \textcite{OnayOnculer2007}.}
\end{table}

In each table and for every choice problem we include the growth rate associated with each option using the ensemble approach (\EAgr) and the time approach (\TAgr). We also include the difference between these growth rates and the fraction of subjects who were RATL, thus violating the EDUT prediction. We note that the growth rate magnitudes are not comparable between the experiments. The growth rates in the tables are given in the relevant units for each of the experiments, \ie US Dollars per week (${\$}/wk$) in \tref{experimentDeJarnette} and New Turkish Lira per month ($NTL/mth$) in \tref{experimentOnay}.

These results demonstrate that the ensemble-average growth rate is always higher than the time-average growth rate, $\EAgr > \TAgr$, a result of \person{Jensen's} inequality (see also the proof of \propref{rstl}). It is also illustrated in \fref{TLSetup}, where \EAgr corresponds to the slope of the solid line, which is higher than the time-average growth rate \TAgr, the slope of the line that connects the points $\left(t_0, x\left(t_0\right)\right)$ and $\left(\EA{t}, x\left(t_0\right) + \Dx\right)$. This creates the illusion of faster growth, while in practice, \EAgr is inaccessible as has been clearly identified in the behavioural ecology \textcite[see][p. 16, Box 2.1]{StephensKrebs1986}. For degenerate time lotteries there is no difference between the two growth rates.

The experimental results are not straightforward to interpret, and they do not provide a clear-cut answer as to whether decision makers are generally risk-seeking, risk-neutral or risk-averse in time lotteries. In both experiments, many subjects were risk-seeking and many were risk-averse.
Nevertheless, the difference between the ensemble-average and time-average growth rates reveals a striking regularity. In both experiments, as the ensemble-average growth rate gets larger with respect to the time-average growth rate, decision makers are more likely to prefer less risky lotteries, \ie the bigger the difference is, the more subjects were RATL. This regularity is seemingly counterintuitive. Under growth-optimality, when the growth rate associated with a particular choice is very high, one would expect it to be attractive to decision makers. But the higher the ensemble-average growth rate is, relative to the sure payment, the less attractive it becomes. This is illustrated in \fref{RATL_vs_growthrates_diff}.

\begin{figure}[!htb]
\centering
\includegraphics[width=1.0\textwidth]{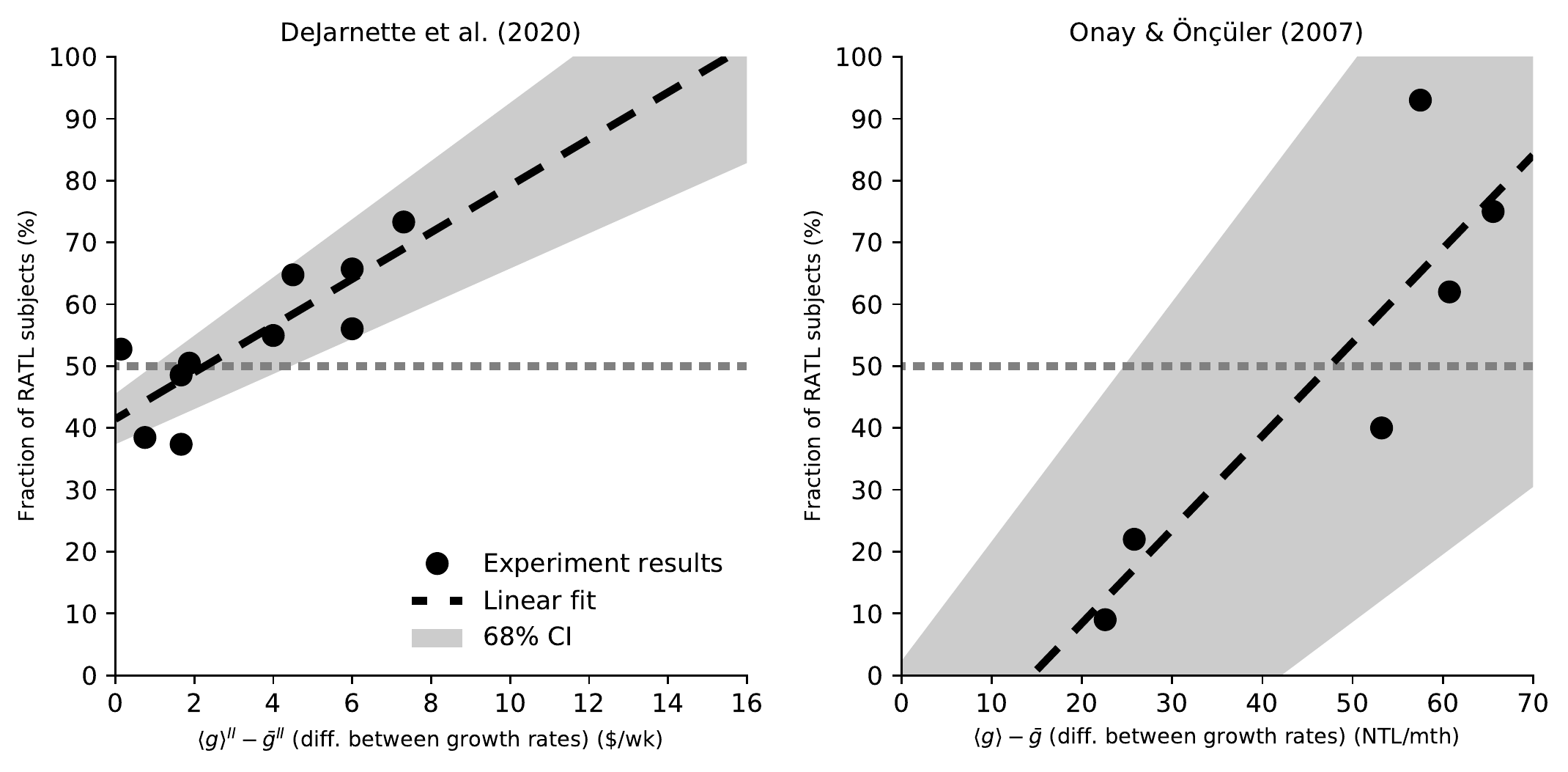}
\caption{The relationship between the fraction of RATL subjects and the difference between the ensemble-average and time-average growth rates ($\EAgr-\TAgr$) in incentivised experiments (see \tref{experimentDeJarnette} and \tref{experimentOnay}). Left) Results for Part I of the experiment in \textcite{DeJarnetteETAL2020}. Right) Results for Study 1 in \textcite{OnayOnculer2007}. The black lines are OLS linear fits for the fraction of RATL subjects as a function of $\EAgr-\TAgr$ ($R^2 = 0.67$ in \textcite{DeJarnetteETAL2020}, $R^2 = 0.76$ in \textcite{OnayOnculer2007}). The grey areas represent 68\% (1$\sigma$) confidence intervals based on the linear fits.}
\flabel{RATL_vs_growthrates_diff}
\end{figure}

Thus, the experimental results falsify the ensemble approach, or EDUT. Strictly speaking, EDUT predicts that all subjects must be risk-seeking. More realistically, it would predict an inverse relationship between the fraction of RATL subjects and the difference $\EAgr-\TAgr$, since the riskier option becomes more attractive if evaluated using its ensemble-average growth rate. However, the relationship found in the experiments is significantly positive.

%Furthermore, this result demonstrates that decision makers may not consider the ensemble growth rate as a relevant criterion for their choices.
%\MK{Speculating on the reason for this we could come full circle with ecology and the evolutionary justification again}

In the experiment conducted by \textcite{DeJarnetteETAL2020} we find that if the difference between the ensemble-average and time-average growth rates is small, decision makers are neither significantly risk-seeking nor risk-averse. Thus, in such cases, the prediction of the time approach, that decision makers will be risk-neutral in time lotteries, is in line with the experimental evidence. This indicates that the time approach may provide a better model of decision-making, also echoing the experimental results in \textcite{MederETAL2019}.

The risk-neutrality predicted in the time approach may practically lead to arbitrary fractions of RATL. Thus, the time approach predicts neither an upward nor a downward sloping relationship for the fraction of RATL subjects and the difference $\EAgr-\TAgr$, and the experimental results do not support or falsify it. Other effects, possibly systematic, may affect decision makers' choices. It is also possible that if decision makers are forced to choose, there are higher-order decision criteria that are not captured by growth-optimality alone. Other experiments, with lotteries for which the time approach does not predict indifference, are therefore necessary to confirm or falsify it.

%Thus, the time approach does not make predictions that are as sharp as EDU in the context of time lotteries.
%This may reflect the results observed in \fref{RATL_vs_growthrates_diff} for the experiment in \textcite{OnayOnculer2007}.
%\MK{Don't test time approach where it is weak neutral, test it where it makes sharp predictions as in \cite{MederETAL2019}}
%\MK{Add a comment wrt second order criterion to account for the role of variance in the evaluation of the decision, hierarchy of levels}

\subsection{Distinguishing Experimental Design}
\sseclabel{DistinguishingExperiments}
In our basic setup, decision makers had to choose between a risky time lottery and its corresponding riskless degenerate time lottery. This is a choice between two options with the same payment amount and the same expected payment time. In such a case, maximising the time-average growth rate predicts risk-neutrality, \ie indifference between the risky time lottery and the riskless degenerate time lottery. To design experiments which may falsify or confirm the time approach prediction the choice has to be between lotteries that differ either in the payment amount or in the expected payment time.

\paragraph{Setup 1 -- adjusting times}
A simple setup of such a choice problem would be between a risky time lottery, \TL, and a riskless timed payment, \TP, with the same payment amount, \Dx and is depicted in \fref{AdjustingTimes}. The expected payment time of the time lottery, $\EA{t}$, would now differ from the payment time of the timed payment, $t_{\TP}$.
If $\EA{t} > t_{\TP}$ the time approach predicts $\TP \succ \TL$, since
\be
\TAgr_{\TP} = \nicefrac{\Dx}{t_{\TP}} > \nicefrac{\Dx}{\EA{t}} = \TAgr_{\TL} ~,
\ee
and $\TP \prec \TL$ if $\EA{t} < t_{\TP}$.
The advantage of this setup is the invariance to subjective perception of money.

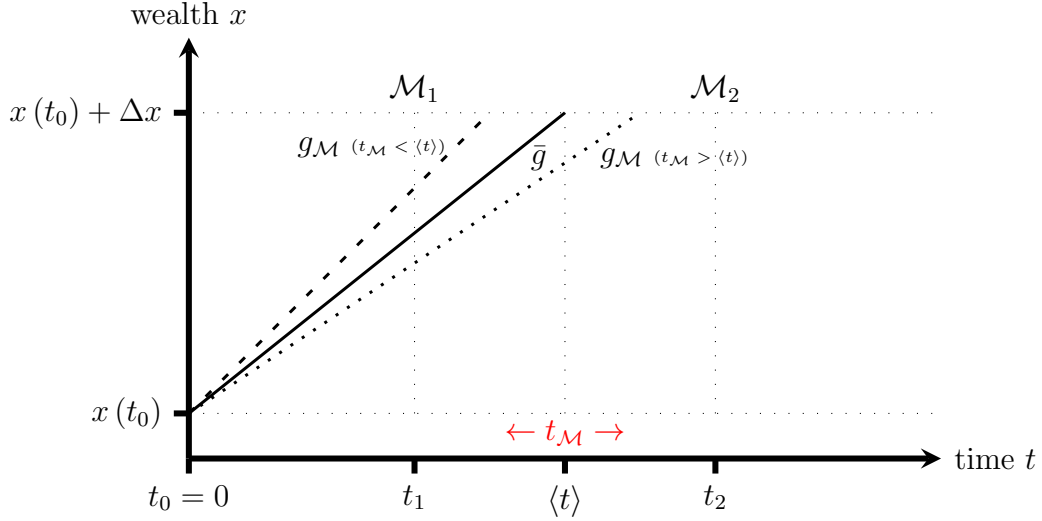
\begin{figure}[h]
\begin{center}
  \begin{tikzpicture}[
% 	scale=0.4,
	% every node/.style={draw,shape=circle,fill=blue},
	>=stealth																						% style of arrow tips
	]
		\draw [dash pattern=on \pgflinewidth off 5pt] (0,0) -- (10,0);
		\draw [dash pattern=on \pgflinewidth off 5pt] (0,4) -- (10,4);
		\draw [->, line width=0.8mm] (0,-0.6) -- (10,-0.6)	node[right,xshift=0pt]	{time $t$};
		\draw [->, line width=0.8mm] (0,-0.6) -- (0,5) 		node[above,yshift=0pt]	{wealth $x$};
		
		\draw[-,line width=0.8mm] (0,-0.8)--(0,-0.6) node[below, yshift=-5pt]{$t_0 = 0$};
		\draw[-,line width=0.8mm] (3,-0.8)--(3,-0.6) node[below, yshift=-5pt]{$\tI$};
		\draw[-,line width=0.8mm] (5,-0.8)--(5,-0.6) node[below, yshift=-5pt]{\EA{t}};
		\draw[-,line width=0.8mm] (5,-0.6) node[above, yshift=0pt]{\red{$\leftarrow t_{\TP} \to$}};
		\draw[-,line width=0.8mm] (7,-0.8)--(7,-0.6) node[below, yshift=-5pt]{$\tII$};

		\draw[-,line width=0.8mm] (-0.2,0)--(0,0) node[left, xshift=-5pt]{$x\left(t_0\right)$};
		\draw[-,line width=0.8mm] (-0.2,4)--(0,4) node[left, xshift=-5pt]{$x\left(t_0\right) + \Dx$};
		
		\draw [dash pattern=on \pgflinewidth off 5pt] (3,0) -- (3,4) node [above] at (3,4)	{$\TP_1$};
		\draw [dash pattern=on \pgflinewidth off 5pt] (5,0) -- (5,4) node [above] at (5,4)	{};
		\draw [dash pattern=on \pgflinewidth off 5pt] (7,0) -- (7,4) node [above] at (7,4)	{$\TP_2$};
		
		\coordinate (origin) at (0,0);
		\coordinate (x_t1) at (3,4);
		\coordinate (earlier) at (4,4);
		\coordinate (exp_t) at (5,4);
		\coordinate (later) at (6,4);
		\coordinate (x_t2) at (7,4);
		\coordinate (bogus) at (4.2,4);

% 		\draw [loosely dotted,line width=0.4mm] (origin) -- (x_t1) node[anchor=east,pos=0.85] {\gI};
% 		\draw [line width=0.4mm,red]	(origin) -- (bogus)	node[anchor=east,pos=0.85] {\EAgr};
		\draw [line width=0.4mm] (origin) -- (exp_t) node[anchor=west,pos=.85,xshift=4pt] {\TAgr};
		\draw [loosely dashed, line width=0.4mm] (origin) -- (earlier) node[anchor=east,pos=.85,xshift=5pt,yshift=4pt] {$g_\TP$ {\tiny ($t_\TP < \EA{t}$)}};
		\draw [loosely dotted, line width=0.4mm] (origin) -- (later) node[anchor=west,pos=.85,xshift=6pt] {$g_\TP$ {\tiny ($t_\TP > \EA{t}$)}};
% 		\draw [loosely dotted, line width=0.4mm] (origin) -- (x_t2) node[anchor=west,pos=0.85,xshift=5pt] {\gII};
\end{tikzpicture}
\caption{Setup 1 -- adjusting the payment times. If the payment time $t_{\TP}$ of the riskless timed payment is earlier (later) than the expected payment time, then $t_{\TP} < \EA{t}$ ($t_{\TP} > \EA{t}$) and the corresponding growth rate is higher (lower) than $\TAgr_\TL$. The time approach predicts in this case that decision makers are RATL (RSTL).}
\flabel{AdjustingTimes}
\end{center}
\end{figure}

\paragraph{Setup 2 -- adjusting amounts}
A different distinguishing setup would be a choice problem between a risky time lottery, \TL, and a riskless timed payment, \TP, with payment amounts $\Dx_{\TL}$ and $\Dx_{\TP}$, respectively, depicted in \fref{AdjustingAmounts}.
The expected payment time of the time lottery is the same as the payment time of the timed payment, $\EA{t} = t_{\TP}$. If $\Dx_{\TP} > \Dx_{\TL}$ the prediction is $\TP \succ \TL$, since
\be
\TAgr_{\TP} = \nicefrac{\Dx_{\TP}}{t_{\TP}} > \nicefrac{\Dx_{\TL}}{\EA{t}} = \TAgr_{\TL} ~,
\ee
and $\TP \prec \TL$ if $\Dx_{\TP} < \Dx_{\TL}$.

We note that in EDUT there is no clear-cut prediction of choices in these setups as EDUT is underdetermined. In the first setup the prediction would require specifying a subjective discount function. In the second setup it would require specifying a subjective utility function. If we use the ensemble-average growth rate as the equivalent choice criterion of EDUT instead, we can achieve a clear-cut prediction. It is possible to design setups in which using the time-average growth rate and the ensemble-average growth rate as choice criteria will lead to opposite predictions. For example, in the first setup, if the possible payment times in \TL are \tI and \tII, with probability $p$ to receive the payment at \tI such that
\be
\EA{t} > t_{\TP} > \left(\nicefrac{p}{\tI} + \nicefrac{\left(1-p\right)}{\tII}\right)^{-1}
\ee
then
\be
\TAgr_{\TL} < \TAgr_{\TP} = \EAgr_{\TP} < \EAgr_{\TL} ~,
\ee
so maximising the time-average growth rate predicts preference for the riskless timed payment whereas EDUT predicts preference for the risky time lottery.

\begin{figure}[htb]
\begin{center}
\begin{tikzpicture}[
% 	scale=0.4,
	% every node/.style={draw,shape=circle,fill=blue},
	>=stealth																						% style of arrow tips
	]
		\draw [dash pattern=on \pgflinewidth off 5pt] (0,0) -- (10,0);
		\draw [dash pattern=on \pgflinewidth off 5pt] (0,4) -- (10,4);
		\draw [->, line width=0.8mm] (0,-0.6) -- (10,-0.6)	node[right,xshift=0pt]	{time $t$};
		\draw [->, line width=0.8mm] (0,-0.6) -- (0,5) 		node[above,yshift=0pt]	{wealth $x$};
		
		\draw[-,line width=0.8mm] (0,-0.8)--(0,-0.6) node[below, yshift=-5pt]{$t_0 = 0$};
		\draw[-,line width=0.8mm] (3,-0.8)--(3,-0.6) node[below, yshift=-5pt]{$\tI$};
		\draw[-,line width=0.8mm] (5,-0.8)--(5,-0.6) node[below, yshift=-5pt]{$\EA{t}$};
		\draw[-,line width=0.8mm] (7,-0.8)--(7,-0.6) node[below, yshift=-5pt]{$\tII$};

		\draw[-,line width=0.8mm] (-0.2,0)--(0,0) node[left, xshift=-5pt]{$x\left(t_0\right)$};
		\draw[-,line width=0.8mm] (-0.2,4)--(0,4) node[left, xshift=-5pt]{$x\left(t_0\right) + \red{\Dx}$};
		\draw (-.5,4.2) node[above, red]{$\uparrow$};
		\draw (-.5,3.8) node[below, red]{$\downarrow$};
		
		\draw [dash pattern=on \pgflinewidth off 5pt] (3,0) -- (3,4) node [above] at (3,4)	{$\TP_1$};
		\draw [dash pattern=on \pgflinewidth off 5pt] (5,0) -- (5,4) node [above] at (5,4)	{};
		\draw [dash pattern=on \pgflinewidth off 5pt] (7,0) -- (7,4) node [above] at (7,4)	{$\TP_2$};
		
		\coordinate (origin) at (0,0);
		\coordinate (x_t1) at (3,4);
		\coordinate (lower) at (5,3);
		\coordinate (exp_t) at (5,4);
		\coordinate (higher) at (5,5);
		\coordinate (x_t2) at (7,4);
		\coordinate (bogus) at (4.2,4);

% 		\draw [loosely dotted,line width=0.4mm] (origin) -- (x_t1) node[anchor=east,pos=0.85] {\gI};
% 		\draw [line width=0.4mm,red]	(origin) -- (bogus)	node[anchor=east,pos=0.85] {\EAgr};
		\draw [loosely dashed, line width=0.4mm] (origin) -- (lower) node[anchor=north west,pos=1,yshift=9pt] {$\TAgr_{\TP}$ {\tiny ($\Dx_{\TP} < \Dx_{\TL}$)}};
		\draw [line width=0.4mm] (origin) -- (exp_t) node[anchor=north west,pos=1,yshift=9pt] {$\TAgr$};
		\draw [loosely dotted, line width=0.4mm] (origin) -- (higher) node[anchor=south west,pos=1,yshift=-7pt] {$\TAgr_{\TP}$ {\tiny ($\Dx_{\TP} > \Dx_{\TL}$)}};
% 		\draw [loosely dotted, line width=0.4mm] (origin) -- (x_t2) node[anchor=west,pos=0.85,xshift=5pt] {\gII};
\end{tikzpicture}
\end{center}
\caption{Setup 2 -- adjusting the payment amounts. If the payment amount $\Dx_{\TP}$ of the riskless timed payment is higher (lower) than the amount of the risky time lottery, then the corresponding growth rate is higher (lower) than $\TAgr_\TL$, and the time approach predicts that decision makers are RATL (RSTL).}
\flabel{AdjustingAmounts}
\end{figure}
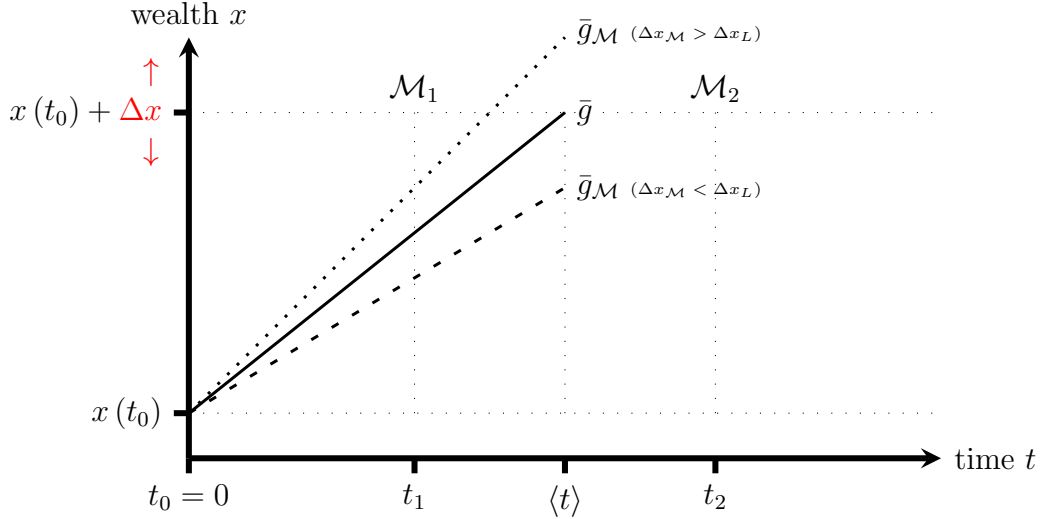

\section{Discussion} \seclabel{Discussion}

This paper explores time lotteries under the postulate that decision makers maximise the growth rate of their wealth. We consider a basic choice problem between a certain payment at a certain time, to the same payment but with an uncertain payment time, assuming the expected payment time is equal to the certain one.

In time lotteries, the growth rate of wealth is a random variable and to make a growth-optimal choice between two lotteries requires collapsing this random variable into a scalar. We present two approaches to compute this scalar. In the first approach, the time approach, the growth rate associated with a time lottery is the growth rate experienced by the decision maker if the lottery is indefinitely repeated over time. Maximisation of the time-average growth rate predicts risk-neutral behaviour in time lotteries.

In the second approach, the ensemble approach, the growth rate associated with a time lottery is the mean or ensemble average of the growth rates, as if the lottery was simultaneously realised infinitely many times. This growth rate does not correspond to any real-world experience of a decision maker. Maximising the ensemble-average growth rate is equivalent to maximising the expected change in discounted utility, and predicts risk-seeking behaviour in time lotteries.
Both approaches are consistent with standard axioms of choice and assume neither behavioural bias nor dynamic inconsistency in the decision maker's behaviour. At all times she prefers the option with the highest growth rate.
%This paper thus emphasises circumstantial rather than psychological explanations of risk preferences in time lotteries.

Of the two approaches, only the time-average growth rate is a physically meaningful decision criterion, as it corresponds to a decision maker's experience if the time lottery is indefinitely repeated. Together with the empirical evidence, our analysis thus invalidates the decision criterion put forth by the ensemble approach, or EDUT. The time approach predicts risk-neutrality in time lotteries.
Empirically testing a theory where its prediction is risk-neutrality is challenging, as systematic effects not captured by the choice model may influence decisions. This complicates the validation of the time approach. It is not a weakness of this particular approach, but merely a statement that the existing experiments were not designed to test the time approach. Therefore we propose two setups for distinguishing experiments. Applying the core idea of ergodicity economics, using the decision maker's experience over time as a decision criterion, still helps reconciling decision-theoretical reasoning with the empirical evidence.

The main contribution of this paper is the rationalisation of risk-seeking, risk-neutral and risk-averse behaviour in time lotteries in a normative model with a single choice criterion. This allows revisiting existing experimental evidence on risk preferences in time lotteries in light of growth-optimality. We find that the time approach accords better with the experimental evidence than the ensemble approach. We also find, seemingly surprisingly in the context of EDUT, that the higher the ensemble-average growth rate is, relative to the sure payment, the less attractive it becomes. It demonstrates that decision makers may not consider the ensemble-average growth rate as a relevant criterion for their choices. This may be due to the lack of realism in this growth rate, \ie it does not correspond to the real-life experience of decision makers in time lotteries.

Moreover, this paper provides a framework for future experiments. It is possible to design choice problems between time lotteries for which one lottery will be preferred in the ensemble approach and the other in the time approach. Such experiments are able to confirm or falsify the two approaches, and are planned.

We end with a conceptual remark.
Whether or not decision makers' choices maximise the growth rate of their wealth is an open question. Yet, if they do, there is also the question of which growth rate they have in mind when making their choices? Here we present two options.
The ensemble approach uses the expected value, which effectively averages over an ensemble of ``copies'' of the decision maker that cannot be accessed. Moreover, the growth rates that characterise the inaccessible copies cannot be pooled to yield a single realistic growth rate. The ensemble-average growth rate is thus a quantity that never captures the decision maker's experience.
The time approach uses the long time limit, which may also be incompatible with the decision maker's choices if their time horizon is realistically short.
We let data decide which approach better describes how decision makers choose. In the case of time lotteries the time approach seems to be better aligned with the experimental evidence than the ensemble approach. This may reflect the view of \textcite[60]{Kacelnik1997} that the same ``process used for one-off events seems to obey a law that evolved as an adaptation to cope with repetitive events.''

%The ensemble approach uses the expected value, which effectively averages over an ensemble of ``copies'' of the decision maker, which an individual decision maker simply cannot access. The ensemble-average growth rate is thus an unphysical quantity, for the plain reason it is never experienced by any decision maker. The time approach uses the long time limit, which may also be incompatible with the decision maker's choices if their time horizon is realistically short. We let data decide which approach better describes how decision makers choose. In the case of time lotteries the time approach seems to be better aligned with the experimental evidence than the ensemble approach. This may reflect the view of \textcite[60]{Kacelnik1997} that the same ``process used for the [one-shot] events seems to obey a law that evolved as an adaptation to cope with repetitive events.''

\printbibliography
	
\appendix
\section{Proof of \propref{trans}}
\label{app:appA}

% \subsection{Proof of \propref{trans}}

We would like to show that if a relation $\succsim$ is growth-optimal over time lotteries, then it satisfies the \person{von Neumann-Morgenstern} axioms: completeness, transitivity, continuity and independence. For completeness and transitivity there is no difference between the ensemble and time approaches, and the proof is similar. For continuity and independence we separate the proofs for each of the approaches. The ensemble approach is fully consistent with the independence axiom. In the time approach, growth-optimal preferences satisfy independence for time lotteries with the same payment (as commonly described). If time lotteries with different payments are allowed to be compared, this axiom may be violated, and we describe a counterexample.

\subsection*{Completeness}

Any two time lotteries \TLa and \TLb correspond to growth rates \gra and \grb, respectively. It follows that
\begin{align}
\TLa \succ \TLb	&\iff \gra > \grb	 	~, \\
\TLa \sim	\TLb	&\iff \gra = \grb		~, \\
\TLa \prec \TLb	&\iff \gra < \grb		~,
\end{align}
so either \TLa is preferred, \TLb is preferred, or the decision maker is indifferent.

\qed

\subsection*{Transitivity}

For any three lotteries \TLa, \TLb and \TLc, corresponding to growth rates \gra, \grb and \grc, such that
\begin{align}
\TLa \prec \TLb	&\iff \gra < \grb	 	~, \\
\TLb \prec \TLc	&\iff \grb < \grc		~,
\end{align}
if follows that $\gra < \grc$, so $\TLa \prec \TLc$.

Similarly, if
\begin{align}
\TLa \sim \TLb	&\iff \gra = \grb	 	~, \\
\TLb \sim \TLc	&\iff \grb = \grc		~,
\end{align}
then also $\gra = \grc$ and $\TLa \sim \TLc$.

\qed

\subsection*{Continuity}

Given three time lotteries \TLa, \TLb and \TLc, such that $\TLa \prec \TLb$ and $\TLb \prec \TLc$, we wish to show that there exists a weight $\natop\in\left[0,1\right]$ such that $\natop \TLa + \left(1-\natop\right)\TLc \sim \TLb$.

To show that we first need to define the ``natural'' operation \parencite[as described in][]{vonNeumannMorgenstern1944} between the lotteries, \ie, a combination of two time lotteries with a weight \natop. We assume \TLa corresponds to the tuple $\left(t_{a,1}, t_{a,2}, p_a, \Dx_a\right)$ (see the definition of a time lottery in \secref{model}), \TLb to the tuple $\left(t_{b,1}, t_{b,2}, p_b, \Dx_b\right)$ and \TLc to the tuple $\left(t_{c,1}, t_{c,2}, p_c, \Dx_c\right)$. The combination $\natop \TLa + \left(1-\natop\right)\TLc$ is defined as a lottery that pays:
\begin{align}
	\Dx_a\text{ at time } t_{a,1}&\text{, with probability } \natop\cdot p_a  	~, \\
	\Dx_a\text{ at time } t_{a,2}&\text{, with probability } \natop\cdot \left(1-p_a\right)  	~, \\
	\Dx_c\text{ at time } t_{c,1}&\text{, with probability } \left(1-\natop\right)\cdot p_c  	~, \\
	\Dx_c\text{ at time } t_{c,2}&\text{, with probability } \left(1-\natop\right)\cdot \left(1-p_c\right)  	~.
\end{align}
%
% 
%Ensemble approach

\paragraph{Continuity in the ensemble approach}

We begin with the ensemble approach. In the ensemble approach the time lotteries \TLa, \TLb and \TLc correspond to the following respective growth rates
\begin{align}
	\EAgra		&=	p_a \frac{\Dx_a}{t_{a,1}} + \left(1-p_a\right) \frac{\Dx_a}{t_{a,2}} ~,\\
	\EAgrb		&=	p_b \frac{\Dx_b}{t_{b,1}} + \left(1-p_b\right) \frac{\Dx_b}{t_{b,2}} ~,\\
	\EAgrc	&=	p_c \frac{\Dx_c}{t_{c,1}} + \left(1-p_c\right) \frac{\Dx_c}{t_{c,2}} ~.
\end{align}
The combined lottery $\natop \TLa + \left(1-\natop\right)\TLc$ has the growth rate
\be
\natop\cdot p_a\frac{\Dx_a}{t_{a,1}}+\natop\cdot \left(1-p_a\right)\frac{\Dx_a}{t_{a,2}} + \left(1-\natop\right)\cdot p_c\frac{\Dx_c}{t_{c,1}}+\left(1-\natop\right)\cdot \left(1-p_c\right)\frac{\Dx_c}{t_{c,2}} =\natop\EAgra + \left(1-\natop\right)\EAgrc ~.
\ee

To show that growth-optimality is continuous, we look for a weight $\natop \in \left[0,1\right]$ for which the growth rate of the combined lottery, $\natop\EAgra + \left(1-\natop\right)\EAgrc$, is exactly the same as \EAgrb. Solving
\be
	\EAgrb = \natop\EAgra + \left(1-\natop\right)\EAgrc
\ee
we get
\be
	\natop = \frac{\EAgrc-\EAgrb}{\EAgrc-\EAgra} ~,
\elabel{p_ens_cont}
\ee
so there is indifference between \TLb and the combination $\natop \TLa + \left(1-\natop\right)\TLc$.

Since $\TLa \prec \TLb$ and $\TLb \prec \TLc$, $\EAgra < \EAgrb$ and $\EAgrb < \EAgrc$. It follows that $\natop\in\left[0,1\right]$.

%time approach
\paragraph{Continuity in the time approach}

We follow a similar proof for the time approach. In the time approach the time lotteries \TLa, \TLb and \TLc correspond to the following respective growth rates
\begin{align}
	\TAgra		&=	\frac{\Dx_a}{p_a t_{a,1}+ \left(1-p_a\right){t_{a,2}}}	=	\frac{\Dx_a}{\ETa} ~,\\
	\TAgrb		&=	\frac{\Dx_b}{p_b t_{b,1}+ \left(1-p_b\right){t_{b,2}}}	=	\frac{\Dx_b}{\ETb} ~,\\
	\TAgrc	&=	\frac{\Dx_c}{p_c t_{c,1}+ \left(1-p_c\right){t_{c,2}}}	=	\frac{\Dx_c}{\ETc} ~.
\end{align}
The combined lottery $\natop \TLa + \left(1-\natop\right)\TLc$ is the same as in the ensemble approach. However, in the time approach it corresponds to a different growth rate. For calculating this growth rate we assume the lottery is sequentially repeated $N$ times and then evaluate the growth rate after the infinite time limit $N \rightarrow \infty$. We denote by $n_1$ the number of times $\Dx_a$ was paid after time $t_{a,1}$, by $n_2$ the number of times $\Dx_a$ was paid after time $t_{a,2}$, by $n_3$ the number of times $\Dx_c$ was paid after time $t_{c,1}$, by $n_4$ the number of times $\Dx_c$ was paid after time $t_{c,2}$. If follows that the growth rate of the combined lottery is
\begin{align}
	&\lim_{N\to\infty} ~ \frac{\text{Total payment after $N$ rounds}}{\text{Total time elapsed after $N$ rounds}} \\
	&= \lim_{N\to\infty} ~ \frac{\left(n_1+n_2\right)\Dx_a + \left(n_3+n_4\right)\Dx_c}{n_1 t_{a,1} + n_2 t_{a,2} + n_3 t_{c,1} + n_4 t_{c,2}}\\
	&= \lim_{N\to\infty} ~ \frac{\nicefrac{n_1+n_2}{N}\cdot\Dx_a + \nicefrac{n_3+n_4}{N}\cdot\Dx_c}{\nicefrac{n_1}{N}\cdot t_{a,1} + \nicefrac{n_2}{N} \cdot t_{a,2} + \nicefrac{n_3}{N} \cdot t_{c,1} + \nicefrac{n_4}{N}\cdot t_{c,2}}\\
	&= \frac{\natop\Dx_a + \left(1-\natop\right)\Dx_c}{\natop p_a t_{a,1} + \natop \left(1-p_a\right) t_{b,1} + \left(1-\natop\right) p_c t_{c,1} + \left(1-\natop\right) \left(1-p_c\right) t_{c,2}}\\
	&= \frac{\natop\Dx_a + \left(1-\natop\right)\Dx_c}{\natop\ETa + \left(1-\natop\right)\ETc}	~.
\end{align}
To show that growth-optimality is continuous, we look for a weight $\natop \in \left[0,1\right]$ for which the growth rate of the combined lottery, $\frac{\natop\Dx_a + \left(1-\natop\right)\Dx_c}{\natop\ETa + \left(1-\natop\right)\ETc}$, is exactly the same as $\TAgrb = \frac{\Dx_b}{\ETb}$. Solving
\be
\frac{\natop\Dx_a + \left(1-\natop\right)\Dx_c}{\natop\ETa + \left(1-\natop\right)\ETc} = \frac{\Dx_b}{\ETb}
\ee
for $\natop$ yields
\be
\natop = \frac{\Dx_c\ETb - \ETc\Dx_b}{\ETb\left(\Dx_c - \Dx_a\right) + \Dx_b\left(\ETa-\ETc\right)} ~,
\elabel{p_time_cont}
\ee
so there is indifference between \TLb and the combination $\natop \TLa + \left(1-\natop\right)\TLc$.

We now need to show that the value found for $\natop$ is within the interval $\left[0,1\right]$. First, since $\TLb \prec \TLc$, $\TAgrb < \TAgrc$, so $\frac{\Dx_b}{\ETb} < \frac{\Dx_c}{\ETc}$. It follows that the numerator in \eref{p_time_cont} is positive.

Comparing the numerator and the denominator in \eref{p_time_cont} we get:
\begin{align}
	&\Dx_c\ETb - \ETc\Dx_b < \ETb\left(\Dx_c - \Dx_a\right) + \Dx_b\left(\ETa-\ETc\right)\\
	&\iff 0 < \Dx_b\ETa - \Dx_a\ETb\\
	&\iff \frac{\Dx_a}{\ETa} < \frac{\Dx_b}{\ETb}\\
	&\iff \TLa \prec \TLb  ~,
\end{align}
so the denominator is larger than the positive numerator, which renders $\natop\in\left[0,1\right]$.

\qed

\subsection*{Independence}

Given three time lotteries \TLa, \TLb and \TLc, such that $\TLa \prec \TLb$ and a weight $\natop\in\left[0,1\right]$, we would like to show that a growth-optimal decision maker would prefer the combined lottery $\mathcal{B} = \natop \TLb + \left(1-\natop\right)\TLc$ over the combined lottery $\mathcal{A} = \natop \TLa + \left(1-\natop\right)\TLc$.

We define the combined lotteries similarly to the proof of the continuity axiom above. $\mathcal{A}$ is a lottery that pays
\begin{align}
	\Dx_a\text{ at time } t_{a,1}&\text{, with probability } \natop\cdot p_a  	~, \\
	\Dx_a\text{ at time } t_{a,2}&\text{, with probability } \natop\cdot \left(1-p_a\right)  	~, \\
	\Dx_c\text{ at time } t_{c,1}&\text{, with probability } \left(1-\natop\right)\cdot p_c  	~, \\
	\Dx_c\text{ at time } t_{c,2}&\text{, with probability } \left(1-\natop\right)\cdot \left(1-p_c\right)  	~,
\end{align}
and $\mathcal{B}$ is a lottery that pays
\begin{align}
	\Dx_b\text{ at time } t_{b,1}&\text{, with probability } \natop\cdot p_b  	~, \\
	\Dx_b\text{ at time } t_{b,2}&\text{, with probability } \natop\cdot \left(1-p_b\right)  	~, \\
	\Dx_c\text{ at time } t_{c,1}&\text{, with probability } \left(1-\natop\right)\cdot p_c  	~, \\
	\Dx_c\text{ at time } t_{c,2}&\text{, with probability } \left(1-\natop\right)\cdot \left(1-p_c\right)  	~.
\end{align}
%
% 
%Ensemble approach
\paragraph{Independence in the ensemble approach}
We begin with the ensemble approach. In the ensemble approach the combined lotteries, $\mathcal{A}$ and $\mathcal{B}$, correspond to the respective growth rates
\begin{align}
	\EAgrA		&=	\natop\EAgra + \left(1-\natop\right)\EAgrc ~,\\
	\EAgrB		&=	\natop\EAgrb + \left(1-\natop\right)\EAgrc ~.
\end{align}
It follows that
\be
\EAgrA < \EAgrB \iff \natop\EAgra + \left(1-\natop\right)\EAgrc < \natop\EAgrb + \left(1-\natop\right)\EAgrc \iff \EAgra < \EAgrb ~,
\ee
and since $\TLa \prec \TLb$, $\mathcal{B}$ is indeed preferred to $\mathcal{A}$.

\qed

%Time approach
\paragraph{Independence in the time approach}
In the time approach the combined lotteries, $\mathcal{A}$ and $\mathcal{B}$, correspond to the following respective growth rates
\begin{align}
	\TAgrA		&=	\frac{\natop\Dx_a + \left(1-\natop\right)\Dx_c}{\natop\ETa + \left(1-\natop\right)\ETc} ~,\\
	\TAgrB		&=	\frac{\natop\Dx_b + \left(1-\natop\right)\Dx_c}{\natop\ETb + \left(1-\natop\right)\ETc} ~.
\end{align}

Assuming $0 < \natop \leq 1$ (as otherwise the combined lottery is not a combined lottery but simply \TLc), it follows that
\begin{align}
	\elabel{independence_cond}
	&\TAgrA < \TAgrB \\
	&\iff \nn \\
	&\natop^2\left[\Dx_a\left(\ETb - \ETc\right) + \Dx_b\left(\ETc - \ETa\right) + \Dx_c\left(\ETa - \ETb\right)\right] + \\
	&\natop\left[\ETc\left(\Dx_a - \Dx_b\right) + \Dx_c\left(\ETb - \ETa\right)\right] < 0\\
	&\iff	\nn\\
	&\natop\left[\Dx_a\left(\ETb - \ETc\right) + \Dx_b\left(\ETc - \ETa\right) + \Dx_c\left(\ETa - \ETb\right)\right] +\\
	&\left[\ETc\left(\Dx_a - \Dx_b\right) + \Dx_c\left(\ETb - \ETa\right)\right] < 0 \\
	&\iff	\nn\\
	&\natop < - \frac{\ETc\left(\Dx_a - \Dx_b\right) + \Dx_c\left(\ETb - \ETa\right)}{\Dx_a\left(\ETb - \ETc\right) + \Dx_b\left(\ETc - \ETa\right) + \Dx_c\left(\ETa - \ETb\right)}\\
	&\iff	\nn\\
	&\frac{\ETc\left(\Dx_b - \Dx_a\right) + \Dx_c\left(\ETa - \ETb\right)}{\ETc\left(\Dx_b - \Dx_a\right) + \Dx_c\left(\ETa - \ETb\right) + \left(\Dx_a\ETb - \Dx_b\ETa\right)} > 1 ~.
\end{align}

To last line is of the form $\frac{a}{a+b}$, where $a = \ETc\left(\Dx_b - \Dx_a\right) + \Dx_c\left(\ETa - \ETb\right)$, and $b = \Dx_a\ETb - \Dx_b\ETa$. To show that the independence condition $\TAgrA < \TAgrB$ in \eref{independence_cond} is fulfilled, it is therefore required that $a > 0$, $b < 0$ and $a+b > 0$. While it is guaranteed that $b < 0$ from $\TLa \prec \TLb$, it is possible that $a < 0$. In such a case, there might be weights $\natop$ and time lotteries \TLc such that $\TLa \prec \TLb$ but $\natop \TLb + \left(1-\natop\right)\TLc \prec \natop \TLa + \left(1-\natop\right)\TLc$. In such cases, the independence axiom is violated.

For example, if we consider the time lotteries detailed in \tref{counterexample} we get that $\TLa \prec \TLb$, since $\TAgra=6.67~\nicefrac{\$}{\text{sec}}$ and $\TAgrb=8.42~\nicefrac{\$}{\text{sec}}$. However, for $\natop=0.1$, we get that the combined lottery $\mathcal{A} = \natop \TLa + \left(1-\natop\right)\TLc$ corresponds to a growth rate of $\TAgrA = 0.87~\nicefrac{\$}{\text{sec}}$, and $\mathcal{B} = \natop \TLb + \left(1-\natop\right)\TLc$ to $\TAgrB = 0.82~\nicefrac{\$}{\text{sec}}$.

\begin{table}[!htb]
\ra{1.25}
\scriptsize
\centering
\captionof{table}{A counterexample for the independence criterion in the time approach.}\tlabel{counterexample}
%\begin{tabular}{@{}lllll@{}}\toprule[1.5pt]
\begin{tabular}{ccccc}
\addlinespace
\toprule[1.5pt]
\addlinespace
Time lottery & \makecell{$t_1$\\ earlier payment time\\ (sec)} & \makecell{$t_2$\\ later payment time\\ (sec)} & \makecell{$p$\\ earlier payment probability} & \makecell{$\Dx$\\ payment\\ ($\$$)} \\
\midrule[1.5pt]
\TLa		& 1		&	2	&	0.5	&	10 \\
\TLb		& 0.5	&	2	&	0.7	&	8 \\
\TLc		& 2		&	4	&	0.3	&	2 \\
\bottomrule[1.5pt]
\end{tabular}
\end{table}

Yet, as commonly described in the literature, the standard case is a comparison between time lotteries with the same payment. In the context of the time lotteries described herein, this means $\Dx = \Dx_a = \Dx_b = \Dx_c$. It follows that
\be
\elabel{independence_cond2}
\TAgrA < \TAgrB \iff \natop \Dx\left(\ETb - \ETa\right) < 0 ~.
\ee

Under the assumption of equal payments $\TLa \prec \TLb$ implies $\nicefrac{\Dx}{\ETa} < \nicefrac{\Dx}{\ETb}$ and $\ETa > \ETb$. Thus, indeed, $\TAgrA < \TAgrB$ and independence is satisfied.

\qed

\end{document}